\newtheorem{lemma}{Lemma}
\newtheorem{theorem}{Theorem}
\newtheorem{assumption}{Assumption}
\pgfplotsset{width=2cm,compat=1.8}
\def\BibTeX{{\rm B\kern-.05em{\sc i\kern-.025em b}\kern-.08em
    T\kern-.1667em\lower.7ex\hbox{E}\kern-.125emX}}
\newtheorem{remark}{Remark}
\begin{document}
\title{Dynamic and Distributed Routing in IoT Networks based on Multi-Objective Q-Learning}
\author{Shubham~Vaishnav, Praveen~Kumar~Donta,~\IEEEmembership{Senior~Member,~IEEE,}~and~Sindri~Magnússon
\thanks{Authors are with the Department
of Computer and Systems Sciences, Stockholm University, 164 55 Kista, Sweden.
 e-mails: \texttt{\{shubham.vaishnav, praveen.donta, sindri.magnusson\}@dsv.su.se}  }
}
\maketitle
\begin{abstract}

IoT networks often face conflicting routing goals such as maximizing packet delivery, minimizing delay, and conserving limited battery energy. These priorities can also change dynamically: for example, an emergency alert requires high reliability, while routine monitoring prioritizes energy efficiency to prolong network lifetime. Existing works, including many deep reinforcement learning approaches, are typically centralized and assume static objectives, making them slow to adapt when preferences shift. We propose a dynamic and fully distributed multi-objective Q-learning routing algorithm that learns multiple per-preference Q-tables in parallel and introduces a novel greedy interpolation policy to act near-optimally for unseen preferences. The algorithm learns to optimize for energy efficiency, packet delivery ratio, and the composite reward, adapting to changing trade-offs between these metrics without retraining or centralized control. A theoretical analysis further shows that the optimal value function is Lipschitz-continuous in the preference parameter, ensuring that proposed greedy interpolation policy yields provably near-optimal behavior. Simulation results show that our approach adapts in real time to shifting priorities and achieves up to 80–90\% lower energy consumption and up to 5 $\times$ higher cumulative rewards and packet delivery compared to six baseline protocols, under dynamic and distributed settings. Sensitivity analysis across varying preference window lengths confirms that the proposed DPQ framework consistently achieves higher composite reward than all baseline methods, demonstrating robustness to changes in operating conditions.

\end{abstract}

\begin{IEEEkeywords}
Dynamic and Distributed Routing, Multiobjective, Q-Learning, Energy-efficiency, Internet of Things
\end{IEEEkeywords}

\section{Introduction}

The Internet of Things (IoT) has grown rapidly in recent years, connecting billions of devices across domains such as smart healthcare, environmental monitoring, and smart cities. A fundamental task in IoT networks is to sense and transmit data over multi-hop wireless links to destination nodes such as sinks or base stations. Routing in such networks is challenging because nodes are energy-constrained, links are lossy, and traffic requirements vary widely across applications.

Designing efficient routing protocols for IoT requires balancing multiple, often contradictory objectives. Moreover, the relative importance of these objectives is not fixed. For instance, sensor nodes may prioritize energy conservation during routine monitoring, selecting low-power routes that tolerate occasional packet loss. However, during a frost alert, the network may shift to prioritize a higher packet delivery ratio (PDR), possibly using higher-power links, redundant relays, or shorter hops, which can increase overall energy consumption. Future IoT systems must therefore support dynamic preference adaptation, where routing policies adjust to shifting priorities in real time without retraining \cite{liu2017performance, vaishnav2023multi}.

Classical approaches to multiobjective optimization often rely on weighted-sum formulations, where objectives are scaled by fixed weights and combined into a single cost function. While this provides a Pareto front of trade-offs, it assumes that preferences remain static \cite{miettinen1999nonlinear}. In real-world IoT systems, however, priorities may change unpredictably during operation, making static weighting schemes insufficient.

Traditional routing solutions therefore fall short in dynamic IoT settings. Algorithms such as Dijkstra-based methods are computationally expensive for resource-limited devices. Heuristic schemes (e.g., fuzzy logic, clustering, or trust-based protocols) cannot adapt efficiently to changing objectives. Reinforcement learning (RL) approaches \cite{kim2025deep, sharma2024optrisql, wang2024emor, 10419073, 10012574, 10608117, chakraborty2023intelligent,  prabhu2023multiple,  9819969, 9488736, 9685191, 8935210} have emerged as promising alternatives, as they can learn routing decisions from interaction with the environment. However, most existing RL-based works (including deep RL methods) either (i) assume static objectives and retrain policies when preferences change, or (ii) rely on centralized controllers (e.g., SDN-based designs), which face problems of scalability, privacy, and single points of failure \cite{frikha2021reinforcement}. This highlights the need to develop novel RL-based approaches for dynamic large-scale IoT deployments.

\subsection{Motivation}

IoT routing must simultaneously balance multiple performance objectives such as packet delivery ratio (PDR) and energy consumption, while operating on resource-constrained devices. The relative priority of these objectives can differ across applications and may change over time based on network conditions or event-driven demands. A robust routing mechanism must therefore be flexible enough to accommodate varying objective trade-offs and computationally light enough to run in fully distributed IoT environments.

\begin{itemize}
    \item \textbf{Dynamic adaptation to shifting priorities:}  
    IoT applications may require different performance goals at different times, for example, prioritizing PDR during critical events or prioritizing energy conservation during routine monitoring. Routing protocols must adapt to such changing preferences in real time.

    \item \textbf{Generalization across preferences without retraining:}  
    Current RL routing schemes typically learn policies tied to a fixed objective weighting and must be retrained when preferences change. This is slow and energy-intensive for IoT devices. A more efficient approach would allow reuse of past learning to accommodate new preference settings without re-learning policies from scratch.
    
    \item \textbf{Fully distributed and scalable operation for IoT:}  Many existing RL-based routing approaches rely on centralized control or global network information, which is impractical for IoT deployments where nodes operate autonomously with limited resources and only local neighbor knowledge.

\end{itemize}

These considerations motivate the development of a routing protocol that is dynamic, distributed, and able to adapt to changing objective preferences in real time. This work focuses on balancing three key IoT routing objectives: minimizing energy consumption, maximizing PDR, and capturing their context-dependent trade-offs through a composite reward.

\vspace{1em}

\subsection{Contributions and Novelty}

This work introduces a fully distributed, dynamic multi-objective Q-learning routing protocol for IoT networks. The key contributions are:

\begin{enumerate}
    \item \textbf{Parallel per-preference Q-learning under IoT constraints:}  
    A lightweight tabular Q-learning scheme that learns a family of Q-tables for multiple objective preferences in parallel from a single behavior policy, enabling efficient reuse of experience.

    \item \textbf{A novel greedy interpolation policy (GIP) for unseen preferences:}  
    A distributed mechanism that allows each node to construct a routing policy for any unseen preference by interpolating between learned Q-values, enabling fast adaptation to dynamic objective trade-offs without retraining.

    \item \textbf{Theoretical guarantee of near-optimality:}  
    We establish analytical bounds showing that interpolated policies remains
provably near-optimal across the entire range of preferences, with suboptimality controlled by the accuracy of learned Q-values and grid resolution. 
\end{enumerate}

Our experimental evaluation demonstrates that the proposed approach significantly outperforms six existing routing protocols, including both heuristic and reinforcement learning baselines, across diverse preference settings. It achieves up to 80–90\% lower energy consumption and up to 5$\times$ higher cumulative rewards and PDR as compared to the six baseline protocols while maintaining adaptability to dynamic objectives in a fully distributed manner. A comprehensive sensitivity analysis further shows that the proposed distributed preference-aware Q-learning approach remains robust across different temporal preference granularities, consistently outperforming existing routing schemes in terms of the composite reward.

\color{black}

The paper is organized as follows: Section~\ref{sec:relatedwork}  discusses related work, Section~\ref{section: ModelAndFormulation} outlines the system model and problem formulation, Section~\ref{sec:Algorithm} details the proposed algorithm, Section~\ref{sec:DistAlg} illustrates how the algorithm can be implemented in fully distributed manner, Section~\ref{sec:Simulations}  presents our simulation results, and Section~\ref{sec:conclusion} concludes.


\section{Related Works}\label{sec:relatedwork}

\newcolumntype{L}[1]{>{\raggedright\arraybackslash}p{#1}}
\newcolumntype{C}[1]{>{\centering\arraybackslash}p{#1}}

\begin{table*}[t]
\centering
\scriptsize
\caption{Comparative summary of 
RL-based routing algorithms.}
\label{tab:rl-routing-summary}
\setlength{\tabcolsep}{3pt}
\renewcommand{\arraystretch}{1.1}
\begin{tabular}{
C{0.09\textwidth}  
L{0.22\textwidth}  
L{0.22\textwidth}  
C{0.08\textwidth}  
C{0.09\textwidth}  
C{0.10\textwidth}  
C{0.08\textwidth}  
}
\toprule
\textbf{Ref} & \textbf{Reinforcement Learning (RL) Method} & \textbf{Multiobjective Optimization (MO) Method} &
\textbf{Adaptation to dynamic preferences} & \textbf{MO Pareto Approximation}  & \textbf{Theoretical guarantees for MO method} & \textbf{Fully Distributed} \\
\midrule

\cite{kim2025deep} & DRL + Intelligent neighbor node selection         & - & -      & -   & -  & \checkmark \\

\cite{sharma2024optrisql, chakraborty2023intelligent, farag2021congestion }  & Tabular QL            & Scalarised linear reward                       & -   & -  & -   & \checkmark  \\ 

\cite{wang2024emor}  & RL with Actor-Critic architecture & RL-learned weights (energy vs.\ progress) & Likely  & -  & -  & -  \\

\cite{10419073} & A3C Based Multistrategy RL & Ratio-based reward & -      & -   & -  & - \\

\cite{10012574} &  Graph Neural Network (GNN) structure
in Deep-RL & - & -      & -   & -  & - \\ 

\cite{prabhu2023multiple} & Multi-policy Deep-RL & Ratio-based rewards & -     & -        & -  & - \\

\cite{8935210} & Model free DRL (Actor-Critic) & Linear scalarized reward & -      & -   & -  & - \\

\cite{kaur2021energy} & Deep-RL (DRL)       &  Linear scalarization (LS) function   & -      & -  & \checkmark & - \\

\hline

\textbf{Our work} & \textbf{Dynamic Preference Q-Learning with Greedy Interpolation} &  \textbf{\checkmark} &  \textbf{\checkmark}  &  \textbf{\checkmark}  &  \textbf{\checkmark} &  \textbf{\checkmark} \\

\bottomrule
\end{tabular}
\end{table*}

Over the last two decades, there has been an upsurge in research related to developing efficient routing protocols for wireless networks and IoT. Some works present enhancements in Dijkstra’s algorithm for routing \cite{5158843, 7435250}. However, such routing protocols are often associated with high local computations, storage requirements, and time complexities. Thus, it is inefficient to utilize them for resource-contrained IoT networks today since they have to deliver quality services to users. Also, some other heuristic-based methods like fuzzy logic \cite{5158843}, trust-based or cluster-based approaches such as MANET and VANET are also unable to meet the multiobjective expectations from IoT of the present and future. With the evolution of Machine Learning (ML), many ML-based approaches have been presented as promising solutions to routing-related optimization problems. Some of these routing algorithms are built on ML paradigms such as clustering, and evolutionary algorithms \cite{10516314}.

RL, among other data-driven ML approaches, has attracted special attention since it provides us with the facility to learn optimal policies while interacting with the environment. Thus, many RL-based routing protocols have been proposed \cite{sharma2024optrisql, chakraborty2023intelligent, farag2021congestion, wang2024emor, prabhu2023multiple, kaur2021energy, kim2025deep, 10419073, 10608117, 8935210, abadi2022rlbeep, bouzid2020efficient, guo2019optimizing, 9819969, 9634122, 9488736, cong2021deep, ergun2022reinforcement, 9685191, 10012574, phan2022deepair}. A direct utilization of RL for IoT is to rely upon a central controller for route planning and control. This has been made possible by the recent development of software-defined networking (SDN). In SDN-enabled IoT networks, a central controller has complete knowledge of the network topology and takes most of the control decisions in the network.
The SDN-based centralized protocols \cite{8935210, 10419073, 9634122, phan2022deepair, 9488736, cong2021deep, 10012574, 9819969} face several challenges. Since the SDN controller plays a pivotal role in such protocols, it is an attractive target for malicious users to carry out DDoS attacks or other security attacks \cite{javanmardi2023sdn}. Such routing protocols are also vulnerable to single-point failures. Another critical issue in IoT is maintaining individual IoT nodes' data privacy. Sharing all the data with a central controller can be both cost-inefficient and also pose a risk to data privacy.
Moreover, as the size of IoT networks is increasing rapidly, scalability is a major issue with these centralized protocols \cite{cong2021deep}.

With a more intelligent organization, it is feasible and beneficial to employ distributed RL algorithms for routing. 
As an example, in tabular Q-learning, SDN-based and other centralized approaches rely on a centralized controller to keep the knowledge of the entire network, store the entire Q-table, and do route planning as the RL agent. However, the Q-table can be split into smaller parts, with each node having its local small Q-tables with information about and interaction with their immediate neighbors. In this way, each node acts as an RL agent capable of making its own decisions. Some existing research works utilize this possibility and thus propose distributed RL-based routing protocols \cite{guo2019optimizing, bouzid2020efficient}. Abadi et al. \cite{abadi2022rlbeep} utilize a sleep scheduling mechanism for enhancing the energy efficiency of their RL-based routing protocol. Such sleep scheduling is difficult to achieve in a fully distributed setup. Thus, they utilize a semi-distributed approach. They divide the network into clusters, with each cluster head acting as the local controller for the cluster. In this work, however, we showcase how a fully distributed RL-based scheme can be effectively applied while ensuring high energy efficiency and other vital objectives. Some other distributed RL-based routing protocols \cite{sharma2024optrisql, chakraborty2023intelligent, farag2021congestion, kim2025deep} have also been proposed, but they lack in dynamically adapting to changing preferences, as described below (Comparisons are summarized in Table~\ref{tab:rl-routing-summary}).

The Routing Protocol for Low-Power and Lossy Networks (RPL) is a protocol standardized by the Internet Engineering Task Force (IETF) for Low-Power and Lossy Networks, and it is widely used in many Internet of Things (IoT) deployments. In RPL, routing decisions are primarily driven by the objective function, which determines parent selection based on one or more routing metrics. As a result, a growing body of work has proposed reinforcement learning–enhanced RPL variants that aim to improve the adaptability of this decision process. Most RL-based RPL schemes formulate routing as a single-objective problem by combining multiple metrics—such as packet delivery ratio (PDR), delay, and energy—into a scalar reward using fixed or heuristically chosen weights \cite{farag2021congestion, homaei2021ddsla, homaei2025load}. Even when weights are adjusted online, for example through learning-automata-based tuning (e.g., DDSLA-/LALA-RPL \cite{homaei2021ddsla, homaei2025load}) or context-aware objective functions \cite{farag2021congestion}, the learned policy ultimately converges to a single operating point for a given application profile. Consequently, adapting to changing application requirements typically requires re-tuning reward weights or re-running the learning process. 

Our proposed scheme overcomes this limitation by supporting dynamic preference changes without retraining. While no existing work demonstrates a fully dynamic, distributed, multi-objective Q-learning routing scheme implemented strictly within RPL’s standard Objective Function and rank computation on resource-constrained nodes, several studies provide strong supporting evidence for the feasibility of such an approach. Prior work has shown that Q-learning can be effectively used for parent selection in RPL-based routing \cite{farag2021congestion, lalani2024quera, sebastian2018load}, and other studies demonstrate that learning-based decision logic can be embedded within RPL objective functions without modifying the protocol’s control plane \cite{kuwelkar2023rpl}. Building on these insights, our proposed dynamic multi-objective Q-learning scheme can be mapped to RPL’s Objective Function and rank computation, enabling learned parent selection while preserving RPL’s existing control messages and DODAG structure. Moreover, prior implementations and evaluations in constrained IoT environments, such as those using Contiki and the Cooja simulator, suggest that such learning-based extensions to RPL are practically feasible \cite{sebastian2018load}.

More generally, existing RL-based routing approaches struggle to adapt quickly to changes in preference settings, as they require time to relearn policies suited to new objectives. For example, Natarajan et al. \cite{natarajan2005dynamic} introduce a distributed multi-objective reinforcement learning (MORL) approach with dynamic preferences. However, this method does not approximate the full Pareto front and instead learns from a limited number of random policies encountered during runtime. Furthermore, it is restricted to average-reward reinforcement learning algorithms such as R-learning and H-learning and does not extend to discounted Q-learning. These limitations are important in IoT settings, where learning efficiency and computational simplicity are critical. Prior work has shown that Q-learning consistently outperforms R-learning, even when both are evaluated using the same undiscounted performance metric \cite{mahadevan1994discount}. In addition, R-learning is highly sensitive to the choice of exploration strategy, with performance degrading significantly under Boltzmann exploration, while H-learning incurs substantially higher computational overhead and slower convergence compared to model-free methods such as Q-learning \cite{tadepalli1994h}. These properties make average-reward methods less suitable for routing in resource-constrained IoT networks.

To overcome the above-mentioned challenges, we propose a novel multi-objective Q-learning–based distributed routing scheme that can exploit the learned Pareto frontier to adapt efficiently and swiftly to dynamically varying preferences in IoT networks. Our approach learns multiple Q-functions corresponding to different preference vectors and introduces a greedy interpolation policy that enables near-optimal routing decisions for previously unseen preferences without restarting learning or relying on centralized coordination. Importantly, we provide formal performance guarantees by proving Lipschitz continuity of the optimal value function with respect to preferences, which yields explicit bounds on the sub-optimality of interpolated policies. To the best of our knowledge, existing RL-based and other routing proposals do not support preference-continuous policies or offer theoretical guarantees under dynamic preference shifts.

\section{System Model and Problem Formulation}
\label{section: ModelAndFormulation}
 \subsection{Network Model and Routing Algorithms} \label{sec:MandA}


Consider an IoT-enabled network of $N$ nodes denoted by the set $\mathcal{N}=\{1,\ldots, N\}$. The nodes can communicate over a  directed graph $\mathcal{G}=(\mathcal{N},\mathcal{E})$, where  $\mathcal{E}\subseteq \mathcal{N}\times  \mathcal{N}$ is a set of edges. In particular, a node $i\in \mathcal{N}$ can communicate data to node $i'\in \mathcal{N}$ if and only if there is an edge from $i$ to $i'$, i.e., if $(i,i')\in \mathcal{E}$. 
In the network, data transmission between nodes may experience packet loss. For each edge \( (i, j) \) in the graph, let \( P(i,j)\in (0,1] \) represent the probability of packet loss when transmitting from node \( i \) to node \( j \). It is important to note that the  probabilities \( P(i,j) \) do not need to be known.

   We focus on the problem of network routing, which involves determining an optimal path for transmitting a packet from a source node to a destination node within the network.
 Network routing is naturally formulated as a Stochastic Shortest Path Markov Decision Process (MDP)~\cite{bertsekas2012dynamic}.
 An MDP models sequential decision-making problems where an agent interacts with its environment over a series of steps. Formally, an MDP consists of a set of states $\mathcal{S}$, a set of actions $\mathcal{A}$, a transition function $T:\mathcal{S}\times \mathcal{A}\rightarrow \Delta( \mathcal{S})$\footnote{\( \Delta(\mathcal{S}) \) denotes the set of all probability distributions over the state space \( \mathcal{S} \). The transition function \( T(s, a) \) specifies a probability distribution over the next states given the current state \( s \) and action \( a \).}, and a reward function $r:\mathcal{S}\times \mathcal{A}\rightarrow \mathbb{R}$. In a Stochastic Shortest Path MDPs we have an absorbing terminal $s_T\in \mathcal{S}$, meaning that once it is reached, no further transitions occur, effectively ending the decision-making process.
 
 The agent interacts with its environment over sequence of episodes. In  each episode, the agent starts in some initial state $S^0\in \mathcal{S}$ and at each step selects an action based on the current state. The environment then transitions to a new state according to the transition function, and the agent receives a reward as determined by the reward function. This process generates a trajectory 
 \begin{equation} \label{eq:Trajectory}
    S^0, A^0, R^0, \dots, S^K, A^K, R^K,S^{K+1},
 \end{equation}
 where $S^{K+1}=s_T$ and the sequence of states evolves according to $S^{k+1} \sim T(S^k,A^k)$. The rewards are given by $R^k = r(S^k, A^k)$.
 At each time step, the agent decides on the next action by following some policy, a mapping $\pi:\mathcal{S}\rightarrow \Delta(\mathcal{A})$ from state to a distribution of actions. 
  The value of the policy $\pi$ is defined as the expected cumulative rewards, mathematically defined as
$$ V_{\pi}(s) =\mathbf{E}\left[ \sum_{k=0}^{K} R^k| S_0=s \right].$$ 
  The agent's goal is to find an optimal policy $\pi^{\star}$ that maximizes the value function.

%
 %
 %
%
%

 In the context of routing, the state space is given by:
$$\mathcal{S}=\mathcal{N}\times\mathcal{N}\cup \{s_T\}.$$
Here, each state $(i,j)\in \mathcal{N}\times\mathcal{N}$ indicates that the packet is currently at node $i$ with the destination being node $j$.  The actions space $\mathcal{A}$ corresponds to the possible decisions or routing choices available at each state. In particular, for state $s=(i,j)\in \mathcal{N}\times\mathcal{N} $  the action space is
$$\mathcal{A}(s)=\{i'\in \mathcal{N}| (i,i')\in \mathcal{E}\}$$
where $a\in \mathcal{A}(s)$ represents selecting the next node $a\in \mathcal{N}$ to which the packet should be forwarded from node $i$ on the path to the destination $j$. The terminal state $s_T$ has no actions. The transition function for a state $s=(i,j)$ where $i\neq j$ is
$$T(s,a)=\begin{cases} (a,j) & \text{ with probability } 1-P(i,a) \\ s_T &\text{ with probability } P(i,a)  \end{cases}$$
 and if $i=j$ it is simply $T(s,a)=s_T$.


A routing policy \( \pi: \mathcal{S} \rightarrow \Delta(\mathcal{A}) \) dictates the action to be taken in a given state \( s = (i, j) \), when the  current location of the packet is node $i$ and  the destination node is \( j \). Specifically, \( \pi(s) \) determines the probability of the next node to which the packet should be forwarded from node \( i \), guiding its progression toward the destination node \( j \). The resulting trajectory is as given in Eq.~\eqref{eq:Trajectory}, where $S^k=(I^k,J^k)$ for $k=0,1,\ldots,K$. Here $I^0$ is source node and the sequence of nodes $I^0,I^1,\dots,I^{K}$ generates a path from source to the destination $J^0$ or to the last node receiving the package before the package drops. 

A key objective of routing is to find the routing policy that optimizes specific metrics, such as energy consumption or Packet Delivery Ratio (PDR). In this context,
 each edge $(i,j)\in \mathcal{E}$  can be associated with a quantity $E(i,j)$ indicating, e.g., the energy needed to transmit data from  $(i,j)\in \mathcal{E}$. 
  We can now define the  energy reward for non-terminal states $s=(i,j)$ and action $a$ as\footnote{The reward is the negative of the true energy, as the objective is maximization.}
 $$r^{\texttt{Energy}}(s,a)= \begin{cases}
     -E(i,a) & \text{if} ~~i\neq j \\
     0      & \text{otherwise.}
 \end{cases}$$
 The energy consumption of an episode in which the agent follows policy $\pi$, starting from state $S^0=(i,j)$, is then given by the random variable
  $$\texttt{Energy}(i,j)=\sum_{k=0}^{K} r^{\texttt{Energy}}(S^k,A^k).$$
  We can now define an energy value function, which is simply the expected value
  $$ V_{\pi}^{\text{Energy}}(i,j)= \mathbf{E} \left[ \texttt{Energy}(i,j) \right]. $$
This value function quantifies the average energy consumption, when transmitting a packet from source $i$ to destination $j$.

There can be  other metrics that we would like to optimize in addition to Energy. Another common metric is PDR.
 To characterize the PDR for a given policy $\pi$ and a pair of source $i$ and destination $j$, consider the path from $i$ to $j$ when following $\pi$, i.e.,  $i^0,\ldots,i^K$ where $i^0=i$, $i^K=j$ and $i^{k+1}=\pi(i^k,j)$ for $k=0,1,2,\ldots,K-1$.
 Then the \texttt{PDR} for the policy $\pi$ is simply 
  \begin{equation}\label{eq:value_PDR}
      V_{\pi}^{\texttt{PDR}}(i,j)=\Pi_{k=1}^K \left(1-P(i_k,i_{k+1})\right).
  \end{equation}
  However, the PDR cannot be computed since we do not know the probabilities $P(i,j)$. However, if we define the reward for state $s=(i,j)$ and action $a$ as
\begin{equation}
     r^{\texttt{PDR}}(s,a) = 
     \begin{cases}
        1,  &  i=j
        \\
        0, & \text{Otherwise}.
        \end{cases}
\end{equation}
This reward is easily measurable, since the receiving node will know if the packet is received or not. Moreover, it is easily established that
$$ V_{\pi}^{\texttt{PDR}}(s)= \mathbf{E} \left[ \sum_{k=0}^{K} r^{\texttt{PDR}}(S^k,A^k) | S^0=s\right],$$
that is $V_{\pi}^{\texttt{PDR}}(s)$ is the value function for the reward $r^{\texttt{PDR}}(s,a)$, and this is why we defined the PDR in Eq.~\eqref{eq:value_PDR} using the notation for a value function.

\subsection{Routing Dynamic Preference}\label{subsec:MOOP}

While most research on routing has traditionally focused on optimizing static objectives such as energy consumption, packet delivery, or latency, real-world scenarios demand more dynamic approaches. In real-world systems, the priority of these objectives can shift rapidly. For instance, when the system has ample energy reserves, optimizing for a higher PDR might be the primary goal to ensure reliable communication. However, as energy levels deplete, conserving power becomes more critical, shifting the focus towards energy efficiency. In other cases, a balanced trade-off between energy usage and PDR might be required, necessitating adaptive strategies that can scale priorities based on current conditions.

 Our goal is to consider Dynamic Preference Routing, where at each episode the preference between the different objective can change. To simplify our presentation, we consider two primary objectives, energy $V_{\pi}^{\rm{Energy}}(s)$ and $V_{\pi}^{\texttt{PDR}}(s)$. However, these could easily be replaced or expanded to include other objectives of interest.
 
 We focus on a setup, where routing decisions are made  over a sequence of $M$ episodes, where $m=1,2,3,\ldots,M$ is the episode index. At the start of each episode $m$, the agent receives a preference indicator  $\beta_m\in [0,1]$, which determines the relative importance of each objective for that specific episode. Specifically, the larger the value of \( \beta_m \), the greater the preference given to the energy objective and the lesser the emphasis on the PDR objective, and vice versa. The agent's task is to find the optimal routing policy, $\pi$, that maximizes the combined, scaled accumulated reward:
\begin{align*} V_{\pi}(s)
= \mathbf{E} \Big[ \sum_{k=0}^{K}  \beta_m r^{\texttt{Energy}}&(S^k,A^k) \\ &+(1-\beta_m)r^{\texttt{PDR}}(S^k,A^k) | S^0=s\Big].
\end{align*}
 It is easily established that
 $$V_{\pi}(s) =\beta_m V_{\pi}^{\rm{Energy}}(s)+(1-\beta_m) V_{\pi}^{\texttt{PDR}}(s).$$
%
 The goal is thus, at each episode $m$, to find the policy $\pi_m$ that optimzies
\begin{equation} \label{Eq:MainFormulation}
\begin{aligned}
& \underset{\pi_m}{\text{maximize}}
& & \beta_m V_{\pi_m}^{\rm{Energy}}(s)+(1-\beta_m) V_{\pi_m}^{\texttt{PDR}}(s).
\end{aligned}
\end{equation}
This means that the agent must dynamically adapt its routing strategy, continuously optimizing for different objectives as the preference vector shifts from one episode to the next. This is particularly challenging because it requires the agent to make real-time adjustments in an unpredictable environment, where the trade-offs between objectives are constantly changing. Our central contribution is to develop a framework that enables the agent to effectively learn and implement these adaptive strategies, ensuring optimal performance across varying and evolving conditions.

\section{Learning Based Routing with Dynamic Preferences} \label{sec:Algorithm}

 In this section, we illustrate our learning based algorithm for  routing with dynamic preferences. We draw on two main ideas, Q-learning and multi-objective optimzation. We first give the background on Q-learning in subsection~\ref{Sec:Q-learning}, followed by a detailed description of our algorithm in subsection~\ref{Sec:Q-DP-learning}. Finally, we explore various implementation strategies for exploration in subsection~\ref{Sec:AlgorithmExploration}.

\subsection{Q-Learning} \label{Sec:Q-learning}

To learn the optimal policy in MDPs from experience or sequences of trajectories, reinforcement learning (RL) algorithms are particularly effective. 
To find the optimal action in any given state, it is useful to introduce the state-action value function, commonly known as the Q-table. The Q-table is essentially a function that maps state-action pairs to expected accumulated rewards if we follow the optimal policy. More formally, the Q-value for a state-action pair \((s, a)\) is defined as the expected value of the accumulated rewards, conditioned on taking action \(A\) in state \(S\) and then following the optimal policy thereafter. Mathematically, this can be expressed as:
\[
Q(s, a) = \mathbb{E}\left[ \sum_{k=0}^{K} R^{k} \mid S_0 = s, A_0 = a \right].
\]
 If we know the Q-table then the optimal policy can be easily recovered by selecting the action that maximizes the Q-value for each state:
 $$\pi(s)=\underset{a\in \mathcal{A}}{\text{argmax}} ~Q(s,a).$$
 However, to determine this optimal policy, we first need to learn the Q-table.

Q-learning is an algorithm designed to learn the optimal Q-values through interaction with the environment. At each step, the algorithm receives a sample consisting of the current state \(S\), the action taken \(A\), the next state \(S'\), and the reward \(R\) obtained after transitioning to \(S'\). The Q-table is then updated using the following rule:
\[
Q(S,A) = Q(S,A) + \alpha \left( R +  \max_{A'} Q(S',A') - Q(S,A) \right).
\]
Here, the update adjusts the Q-value by incorporating both the immediate reward and the maximum estimated future reward from the next state. 
The parameter \(\alpha\) is the learning rate, it plays a critical role in determining how quickly or accurately the Q-learning algorithm converges to the optimal Q-values. When \(\alpha\) is constant, the Q-values update by a fixed proportion after each state-action pair is explored, leading to an approximate solution. As \(\alpha \to 0\), the updates become smaller, allowing the algorithm to converge more closely to the true optimal Q-values. However, this comes at the cost of a slower convergence rate, as smaller updates lead to more gradual refinement of the Q-table. 

To address this trade-off, \(\alpha\) can also be set to decrease gradually each time a state-action pair is explored. By defining \(\alpha\) as a function of the number of times a particular pair \((s, a)\) is visited—denoted as \(\alpha(s, a)\)—we can ensure more precise updates as learning progresses. Specifically, if the learning rate is chosen such that it is summable but not square summable, i.e., \(\sum_t \alpha_t(s,a) = \infty\) and \(\sum_t \alpha_t^2(s,a) < \infty\), it is guaranteed that the algorithm will converge to the true optimal solution. 



\begin{remark}\label{remark:off-policy} A key advantage of Q-learning, which we leverage in our approach, is that it is an off-policy algorithm~\cite{sutton1999reinforcement}. This means that the learning process is independent of the policy being followed during sample collection. In other words, Q-learning can learn the optimal Q-values regardless of the strategy used to explore the environment. As long as every state-action pair is explored sufficiently often, Q-learning is guaranteed to converge to the optimal Q-table. This property is particularly useful in multi-objective scenarios, since  we can learn optimal Q-tables for many preferences simultaneously, irrespective of the actual routing policy being followed during the learning phase.
\end{remark}

\subsection{Dynamic Preference Q-Learning} \label{Sec:Q-DP-learning}
Direct application of Q-learning is not well-suited for addressing dynamic preferences in routing, as it struggles with changing rewards and multiple objectives. In this section, we outline our main algorithmic concepts to tackle this problem.

\begin{algorithm}[t]
\begin{algorithmic}[1]
\caption{\texttt{DPQ-Learning}}\label{alg:DPQ}
\State \texttt{Initialize the grid} $\mathcal{B}$ \texttt{and the Q-tables:}
\For {$\beta\in \mathcal{B}$}
    \State \texttt{Initialize} $Q_{\beta}~\in \mathbb{R}^{|\mathcal{S}|\times|\mathcal{A}|}$  ~~~~ \# E.g put to zero 
\EndFor
\For {each episode $m=1,2,\ldots,M$}
    \State \texttt{Receive} $S^0=(i,j)$.~~~~~~ \# This is the routing task
    \State \texttt{Receive preference vector} $\beta_m$
    \State  \texttt{Initialize iteration count} $k=0$
    \While{$S^k\neq s_T$} ~~~~\# Terminal state is not reached
        \State \texttt{Take action} $A^k\sim \pi_m(S^k)$
        \State \texttt{Receive next state}  $S^{k+1}$ 
        \For {each $\beta$ in $\mathcal{B}$}
            \State $R_{\beta}^k \gets (1-\beta) r^{\text{Energy}}(S^k,A^k)+ \beta r^{\texttt{PDR}}(S^k,A^k)$
                
            \State $Q_{\beta}(S^k, A^k) \gets Q_{\beta}(S^k, A^k)  +  \alpha \Big(R_{\beta}^k + $  \State  \hspace{1.2cm} $\max_{a\in \mathcal{A}} Q_{\beta}(S^{k+1}, a) - Q_{\beta}(S^k, A^k) \Big)$    
        \EndFor
    \EndWhile   
\EndFor
\end{algorithmic}
\end{algorithm}

When we have dynamic preferences, the objective can change quickly, and we must be able to respond or learn to respond rapidly to any preference $\beta$. To achieve this, our algorithm employs a Q-table for each preference parameter $\beta$, leveraging the off-policy nature of Q-learning to learn all Q-functions in parallel, see Remark~\ref{remark:off-policy}. This means that, even while using a policy optimized for one specific $\beta$, we can still use the collected data to update the Q-tables for other parameters. This approach ensures that we remain adaptable and can handle shifting objectives efficiently. However, directly applying this approach is computationally infeasible. 
Thus, we consider a finite grid or a subset \( \mathcal{B} \subseteq [0,1] \) of preferences that we learn, and then interpolate for \( \beta \) that is not in \( \mathcal{B} \). Specifically, for each \( \beta \in \mathcal{B} \), we maintain a Q-table \( Q_{\beta} \), and these \( Q_{\beta} \) are learned in parallel during the routing process. 

We illustrate this processes in Algorithm~\ref{alg:DPQ}. We start by initializing the algorithm with the grid $\mathcal{B}$ and by initializing the Q-table $Q_{\beta}$ for each $\beta\in\mathcal{B}$. A straightforward initialization is to set all the Q-tables to zero, i.e., set $Q_{\beta}(s,a)=0$ for all $(s,a)\in \mathcal{S}\times \mathcal{A}$. After that we run a sequence of $M$ episodes, indexed by $m=1,2,\ldots,M$. Each episode begins with receiving an initial state $S^0=(i,j)$. The initial state indicates the next routing task where $i$ is the source and $j$ is the destination. We also receive a preference $\beta_m$ indicating how to balance the objectives during the routing task of the episode, as described in Eq.~\eqref{Eq:MainFormulation}.


In each episode $m$, we follow the routing policy $\pi_m(\cdot)$, which generates a trajectory of states, actions, and rewards. This trajectory terminates when the system reaches the terminal state $s_T$, either because the packet is dropped or it successfully reaches its destination. From each sample $(S^k, A^k, S^{k+1})$ in the trajectory, we update all the Q-tables $Q_{\beta}$ for each $\beta \in \mathcal{B}$. It is guaranteed that these Q-tables $Q_{\beta}$ will converge to the true optimal Q-values, provided the policies include sufficient exploration, i.e., there is a positive probability of selecting every possible action, see discussion in previous subsection. 


A key part of each episode $m$, is to select a new routing policy $\pi_m(\cdot)$. This policy is based on the preference $\beta_m$ for that episode, and ideally, $\pi_m(\cdot)$ should be optimal with respect to the given preference, aiming to solve the optimization problem in Eq.~\eqref{Eq:MainFormulation}. However, learning the optimal policy requires trial and error. In the following section, we discuss how to efficiently learn the optimal policy for dynamic preferences.

\subsection{Exploration Strategies} \label{Sec:AlgorithmExploration}

When learning from experience using RL, a key challenge is balancing the trade-off between exploration and exploitation. Exploration involves trying out new routes to discover potentially more efficient options, while exploitation leverages the knowledge already gained to select the current best-known routes. Achieving this balance is crucial for ensuring the algorithm does not get stuck in suboptimal routing decisions while still converging towards optimal performance over time.

Pure exploration involves selecting a random action in each state. Formally, we define the pure exploration policy as $\pi^{\texttt{PE}}: \mathcal{S} \rightarrow \Delta(\mathcal{A})$, where $\pi^{\texttt{PE}}(s) = a $ with probability $1/|\mathcal{A}(s)|$
for each $a \in \mathcal{A}(s)$.
Pure exploitation, in contrast, involves selecting the best action based on the current estimate of the optimal $Q$-table. In RL, this is commonly referred to as the greedy policy. In our setting, at episode $m$, the greedy policy selects the action that maximizes the current $Q$-value with respect to the preference vector $\beta_m$. If $\beta_m$ is not in the set $\mathcal{B}$, we determine the greedy policy by interpolating between the nearest parameters in $\mathcal{B}$. 
We illustrate this in Algorithm~\ref{alg:greedy}.
\begin{algorithm}
\begin{algorithmic}[1]
\caption{Greedy Interpolation Policy (GIP)}\label{alg:greedy}
\State \texttt{Receive} $\beta$
\If {$\beta \in \mathcal{B}$}
    \State $Q_{\beta}^{\texttt{Int}}\gets Q_{\beta}$
\Else
    \State $\overline{\beta}\gets \min \{ b \in \mathcal{B} | b\geq \beta \}$
    \State $\underline{\beta} \gets \max \{ b \in \mathcal{B} | b\leq \beta \}$
    \State $\rho \gets (b-\underline{\beta})/(\overline{\beta}-\underline{\beta})$
    \State $Q_{\beta}^{\texttt{Int}}\gets \rho Q_{\underline{\beta}}+ (1-\rho)Q_{\overline{\beta}}  $
\EndIf
\State $\pi_{\beta}^{\texttt{Greedy}}(s):=\underset{a\in \mathcal{A}}{\text{argmax}}~ Q_{\beta}^{\texttt{Int}}(s,a)$
\end{algorithmic}
\end{algorithm}

There are several approaches to efficiently balance the trade-off between exploration and exploitation. These methods typically focus on greater exploration in early stages when little is known about the system. As learning progresses and the Q-tables approach optimal values, more emphasis is placed on exploitation. In this paper, we examine two exploration strategies, though others can be integrated with our algorithm. The strategies we study are: (a) sequential exploration and exploitation, and (b) simultaneous exploration and exploitation.

\textbf{Sequential Exploration and Exploitation}
is a straightforward approach where the first $m_{\texttt{exp}}<m$ episodes are dedicated entirely to exploration. During these initial $m_{\texttt{exp}}$ episodes we employ   the random policy $\pi^{\texttt{PE}}(\cdot)$. The remaining episodes focus solely on exploitation where we employ the greedy policy $\pi_{\beta}^{\texttt{Greedy}}(\cdot)$ in Algorithm~\ref{alg:greedy}. This method ensures that the agent thoroughly explores its environment early on, gaining a broad understanding of possible actions and outcomes. However, a key drawback is that it lacks flexibility, if too many episodes are devoted to exploration, it may not exploit effectively in later stages, while too few exploratory episodes can lead to a suboptimal policy. Additionally, this rigid separation does not adapt to the learning progress, which could hinder efficiency in dynamic environments.

\textbf{Simultaneous Exploration and Exploitation} is an adaptive approach where exploration and exploitation occur together throughout the learning process. Early on, an epsilon-greedy strategy is used, meaning the agent selects random actions with a high probability (epsilon) to encourage exploration, while also exploiting the best-known actions at a smaller rate. Over time, epsilon is gradually reduced, placing less emphasis on exploration and more on exploitation as the agent learns. This method provides a more flexible balance between exploration and exploitation, allowing the agent to adapt as it gains knowledge. However, it can be slower to converge compared to purely exploiting after a set exploration phase.

\subsection{Objectives and Routing Metrics}

The proposed approach is tailored to meet real-world IoT routing needs and evaluates routing decisions based on three key objectives:

\begin{itemize}
    \item \textbf{Energy Consumption:} Each transmission consumes energy depending on the hop distance. Minimizing cumulative energy usage is critical for IoT nodes with limited battery life.
    
    \item \textbf{Packet Delivery Ratio (PDR):} This represents the reliability of a path and is influenced by link quality, node availability, and hop count. Maximizing PDR is essential in scenarios with critical data delivery needs.
    
    \item \textbf{Composite Reward:} A weighted combination of the above metrics based on user-defined preference parameter $\beta \in [0,1]$, allowing the system to dynamically prioritize energy efficiency or reliability during runtime.
\end{itemize}

While the learning framework is general, the algorithm is explicitly implemented for multi-hop IoT routing under realistic constraints: distributed operation, lossy links, and battery-limited nodes. The preference-driven reward design and distributed learning scheme are tailored to unique challenges in IoT environments, distinguishing this work from generic routing strategies.

\section{Distributed DPQ-Learning } \label{sec:DistAlg}

 While the DPQ-Learning algorithm in the previous section offers a centralized approach to dynamic routing, a distributed solution is often more desirable in practice, especially for large-scale or decentralized networks. Distributed algorithms enhance scalability and privacy, reduce communication overhead, and improve robustness by avoiding a single point of failure. We now illustrate how the DPQ-Learning algorithm can be implemented in a fully distributed manner, where each node only needs to exchange information with its immediate neighbors in the routing network.


In our distributed setup, each node \(i\) maintains a local policy and a corresponding set of local Q-tables. Specifically, each node \(i\) retains a Q-table \(Q_{\beta}^i\) for each \(\beta \in \mathcal{B}\). The local Q-tables \(Q_{\beta}^i\) for all nodes $i$ form a disjoint subset of the global Q-table \(Q_{\beta}\) for the preference $\beta$. In particular, for each state \(s = (i,j) \in \mathcal{S}\) and action \(a \in \mathcal{A}(s)\), the local Q-table satisfies the condition 
$$Q_{\beta}^i(j,a) = Q_{\beta}(s,a).$$
%
This implies that each node \(i\) maintains a local Q-table, which allows the node to independently update its estimates based on local information and interactions.

If \(Q^i_{\beta}\) is the true optimal Q-table for each node $i$ then the optimal policy at state \(s = (i,j)\) can be found by selecting the action \(a\) that maximizes the Q-value, i.e., 
\[
\pi^i_{\beta}(j) = \underset{a }{\arg \max}\, Q^i_{\beta}(j, a).
\]
This is a fully local policy, meaning that each node can compute it based purely local information and no communication is needed. To learn the local Q-tables the nodes can execute Algorithm~\ref{alg:DPQ}   in a distributed fashion. In particular, whenever node $i$ receives a routing task, or state $s=(i,j)$, it can update its Q-tables in the following steps. First node $i$ selects action $a$, corresponding to the neighboring node that will receive the packet on route to $j$ (line 10 in Algorithm~\ref{alg:DPQ}). Based on the transmission,  node $i$ receives feedback from node $j$ in terms of rewards $r^{\text{Energy}}$ and $r^{\texttt{PDR}}$ and value of the maximum Q-function from node $j$, $Q_{\beta}^a(j, a')$. 
Node $i$ can then compute the corresponding reward for each preference $\beta\in \mathcal{B}$ as $R_{\beta}=(1-\beta)r^{\text{Energy}}+\beta r^{\texttt{PDR}}$.
The node then updates its local Q-table for the corresponding state-action pair using a distributed version of the Q-learning update rule:
\begin{equation}\label{eq:distributedUpdateRule}
Q_{\beta}^i(j,a) = Q_{\beta}^i(j,a)  +  \alpha \Big(R_{\beta} +\max_{a'} Q_{\beta}^a(j, a') - Q_{\beta}^i(j,a) \Big)
\end{equation}
 This update is fully distributed, as each node relies solely on its local Q-table $Q_{\beta}^i$ and information from its neighbor node that it is transmitting to. 
 
 We illustrate the distributed algorithm in Figure~\ref{fig:distributed-Q-Learn}. In the figure,  a typical node $i$ maintains Q-table(s), storing information only about its immediate neighboring nodes $m, n, $ and $i'$, corresponding to the different preference vectors $\beta_1, \beta_2, \beta_3$. The next relay node ($i'$ in this example) is chosen based on the policy $\pi$. A data packet is sent from node $i$ to node $i'$ aimed at the destination node $j$. In our proposed approach, the chosen relay node $i'$ receives the packet and sends an acknowledgment packet back to node $i$ with the relevant information needed by $i$ to update its local Q-table using the update rule shown in equation \ref{eq:distributedUpdateRule}. This information includes the delivery reward $r^{\texttt{PDR}}(s,a)$ needed to calculate the total reward $R_\beta$, and $\max_{a'} Q_{\beta}^{i'}(j, a')$ from the Q-table of node $i'$. Thus, our proposed distributed approach adds further communication efficiency by encapsulating the relevant information in the acknowledgment packet rather than sending it through a separate packet.

\textbf{Practical Adaptation to IoT Resource Constraints:}
The distributed DPQ-learning framework can be adapted to real-world IoT deployments by selecting the preference-grid size $|\mathcal{B}|$ and training budget according to per-node memory and compute limits. Since learning is localized, each node maintains only a \emph{local} DPQ structure based on its one-hop neighborhood rather than storing Q-values for the full network. Let $|S_{\text{local}}|$ denote the number of local states represented at a node and $|A|$ be the number of candidate next-hop actions. Then, each preference point requires storing $|S_{\text{local}}|\cdot|A|$ Q-values, and the overall per-node storage is $|\mathcal{B}|\cdot|S_{\text{local}}|\cdot|A|$ floating-point values. With 64-bit floating-point storage, the memory footprint is approximately $8\,|\mathcal{B}|\,|S_{\text{local}}|\,|A|$ bytes per node, enabling a tunable accuracy--resource trade-off.

\begin{figure}[t]
     \centering \includegraphics[width=0.48\textwidth]{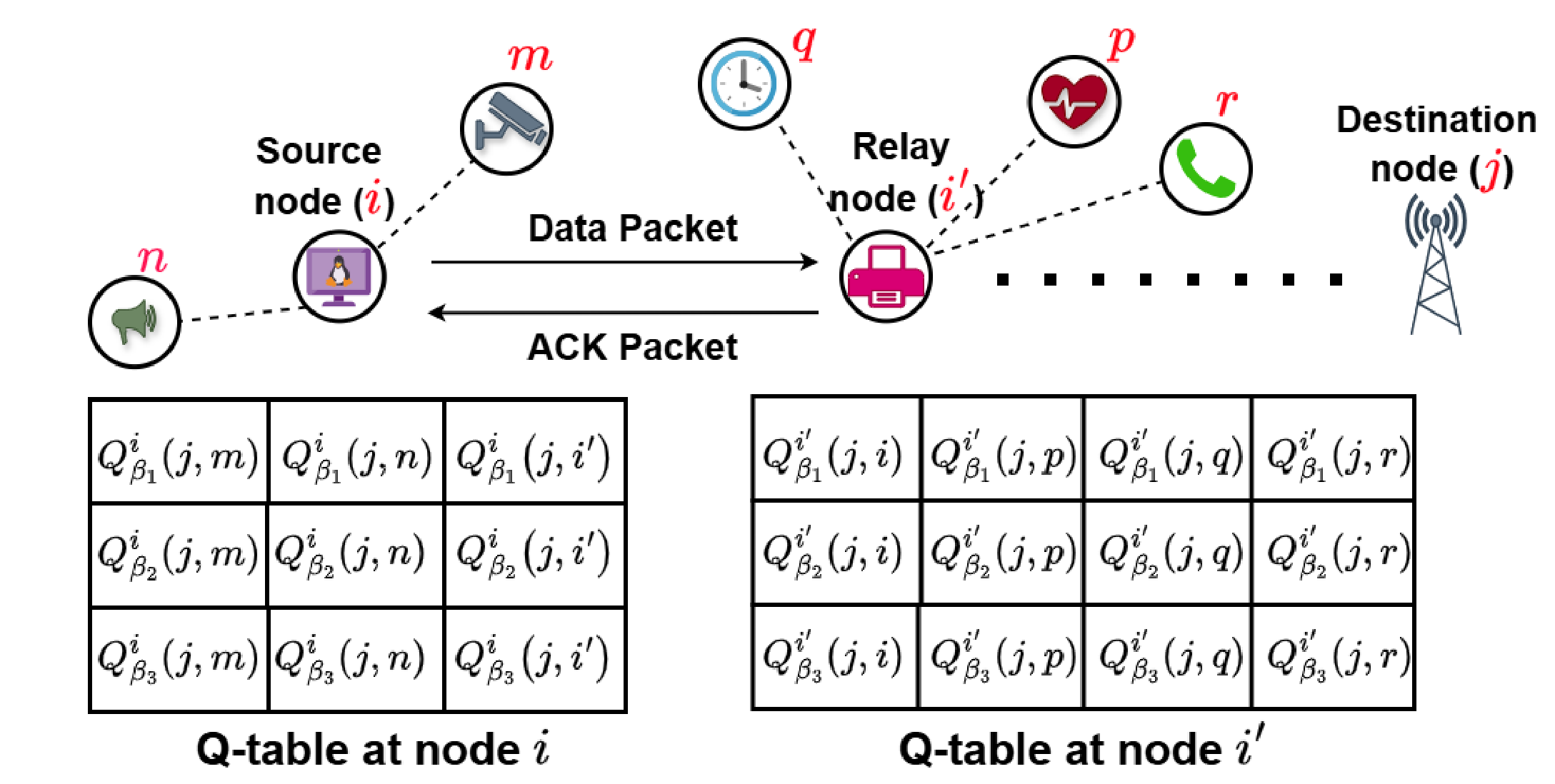}
     \caption{Depiction of Proposed Distributed DPQ-Learning Routing}
    \label{fig:distributed-Q-Learn}
\end{figure}

\section{ Performance Guarantees for the GIP}
\label{sec:theory}

In this section, we quantify the approximation accuracy of the GIP policy. As discussed in Section~IV, the learning procedure converges, at the grid points \(\beta\in\mathcal{B}\), to the corresponding optimal \(Q\)-tables \(Q_\beta\). For any intermediate \(\beta\notin\mathcal{B}\), Algorithm~2 constructs an interpolated \(Q\)-table from its two neighboring grid points and acts greedily with respect to it. We  establish a uniform approximation bound for this policy with respect to the optimal policy at the same \(\beta\). The guarantee is controlled solely by standard problem constants and the grid resolution; thus, refining the grid and improving the per-grid estimates makes the GIP policy uniformly near-optimal over the entire \(\beta\)-range.


Our bounds depend on two problem\mbox{-}level parameters. \emph{First}, the worst\mbox{-}case expected episode length under optimal control, captured by a single constant that applies to all \(\beta\) and all initial states.
Let \(K\) denote the (random) episode length, i.e., the number of time steps from an initial state \(s\) until termination.
For each \(\beta\), define the worst \(\beta\)\mbox{-}optimal expected episode length from state \(s\) by
\[
H_\beta(s)\;\triangleq\;\sup_{\pi\in\Pi_\beta^*}\,\mathbb{E}_s^{\pi}[K],
\qquad
H\;\triangleq\;\sup_{\beta\in\mathcal{B}}\;\sup_{s\in\mathcal{S}}\,H_\beta(s).
\]
We take the supremum over optimal policies $\Pi_\beta^*$ so that the bound is independent of tie\mbox{-}breaking among optimal actions. 
Under standard stochastic shortest\mbox{-}path assumptions (proper optimal policies and bounded rewards), \(H < \infty\).
%
\emph{Second}, our bounds will depend on the difference between the two objectives, or rewards. In particular, define 
\begin{equation} \label{eq:Gamma}
    \Gamma \triangleq \max_{(s,a)\in \mathcal{S}\times \mathcal{A}}|r^{\rm{Energy}}(s,a)-r^{\texttt{PDR}}(s,a)|.
\end{equation}

We now prepare to state the approximation bound for the GIP policy.
We analyze the approximation error after running Algorithm~1 and assume that the $Q$-tables learned at the grid points are accurate to within a common tolerance $\epsilon$, formalized in the following assumption.
\begin{assumption}
\label{assump:pergrid-accuracy}
For each $\beta\in\mathcal{B}$ we have an estimate $\widehat Q_{\beta}$ satisfying
\[
\|\widehat Q_{\beta}-Q_{\beta}\|_\infty \le \varepsilon,
\]
 where $Q_{\beta}$ is the optimal Q-table for preference indicator $\beta$.
\end{assumption}
Then the GIP policy for any $\beta\in [0,1]$ is a greedy policy with respect an interpolation of $\hat{Q}_{\underline{\beta}}$ and  $\hat{Q}_{\overline{\beta}}$ where  $\underline{\beta} \triangleq \max \{ b \in \mathcal{B} | b\leq \beta \}$ and $\overline{\beta}\triangleq \min \{ b \in \mathcal{B} | b\geq \beta \}$. In particular, it is greedy with respect to the Q-table
\begin{equation}\label{eq:Qint} Q_{\beta}^{\texttt{Int}}\triangleq \rho \hat{Q}_{\underline{\beta}}+ (1-\rho)\hat{Q}_{\overline{\beta}},  
\end{equation}
where $\rho \triangleq (b-\underline{\beta})/(\overline{\beta}-\underline{\beta})$.
Meaning that 
$$\pi_{\beta}^{\texttt{Greedy}}(s):=\underset{a\in \mathcal{A}}{\text{argmax}}~ Q_{\beta}^{\texttt{Int}}(s,a).$$

\begin{theorem}\label{Thm:1}
    Under Assumption~\ref{assump:pergrid-accuracy}, for any $\beta\in [0,1]$ the interpolation policy in Eq.~\ref{eq:Qint} 
    $$||Q_{\beta}^{\texttt{Int}}-Q_{\beta}||_{\infty}\leq \epsilon+\Gamma (H+1) (\overline{\beta}-\underline{\beta})$$
    where  $\underline{\beta} \triangleq \max \{ b \in \mathcal{B} | b\leq \beta \}$ and $\overline{\beta}\triangleq \min \{ b \in \mathcal{B} | b\geq \beta \}$.
\end{theorem}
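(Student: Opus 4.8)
The plan is to bound the target quantity by inserting the \emph{noise-free} interpolant $\widetilde{Q}_{\beta}^{\texttt{Int}} \triangleq \rho Q_{\underline{\beta}} + (1-\rho)Q_{\overline{\beta}}$ built from the true optimal grid tables, and then apply the triangle inequality
$$\|Q_{\beta}^{\texttt{Int}} - Q_{\beta}\|_\infty \le \|Q_{\beta}^{\texttt{Int}} - \widetilde{Q}_{\beta}^{\texttt{Int}}\|_\infty + \|\widetilde{Q}_{\beta}^{\texttt{Int}} - Q_{\beta}\|_\infty,$$
splitting the error into an \textbf{estimation} term and an \textbf{interpolation} term. For the estimation term, since $Q_{\beta}^{\texttt{Int}} - \widetilde{Q}_{\beta}^{\texttt{Int}} = \rho(\widehat{Q}_{\underline{\beta}} - Q_{\underline{\beta}}) + (1-\rho)(\widehat{Q}_{\overline{\beta}} - Q_{\overline{\beta}})$ is a convex combination of two vectors each of sup-norm at most $\varepsilon$ by Assumption~\ref{assump:pergrid-accuracy}, I would bound it directly by $\rho\varepsilon + (1-\rho)\varepsilon = \varepsilon$, using $\rho\in[0,1]$.

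The core of the argument is a \textbf{Lipschitz-in-$\beta$ bound on the optimal $Q$-table}, namely $\|Q_{\beta_1} - Q_{\beta_2}\|_\infty \le \Gamma(H+1)|\beta_1 - \beta_2|$, which I would establish as a separate lemma. The starting point is the pointwise reward identity $r_{\beta_1}(s,a) - r_{\beta_2}(s,a) = (\beta_1 - \beta_2)\big(r^{\texttt{PDR}}(s,a) - r^{\rm Energy}(s,a)\big)$, so $|r_{\beta_1} - r_{\beta_2}| \le \Gamma|\beta_1 - \beta_2|$ everywhere by the definition of $\Gamma$ in Eq.~\eqref{eq:Gamma}. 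For any \emph{fixed} policy $\pi$, writing the $Q$-values as expected sums of rewards over the episode gives $|Q_{\beta_1}^\pi(s,a) - Q_{\beta_2}^\pi(s,a)| \le \Gamma|\beta_1-\beta_2|\,(\mathbb{E}_s^\pi[K]+1)$, since a trajectory contributes $K+1$ reward terms. To pass from fixed policies to the optimal $Q$-tables I would use a policy-sandwiching argument: letting $\pi_i^*$ be optimal for $\beta_i$, optimality yields $Q_{\beta_1}(s,a) - Q_{\beta_2}(s,a) \le Q_{\beta_1}^{\pi_1^*}(s,a) - Q_{\beta_2}^{\pi_1^*}(s,a)$ together with the symmetric inequality in the other direction, so both one-sided gaps are controlled by the fixed-policy bound evaluated at an \emph{optimal} policy. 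The episode length then enters only through optimal policies, which is exactly why $H$ is defined as a supremum over $\Pi_\beta^*$; hence $\mathbb{E}^{\pi_i^*}[K] \le H$ and the Lipschitz constant is $\Gamma(H+1)$.

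With the Lipschitz bound in hand, the interpolation term follows by writing $\widetilde{Q}_{\beta}^{\texttt{Int}} - Q_{\beta} = \rho(Q_{\underline{\beta}} - Q_{\beta}) + (1-\rho)(Q_{\overline{\beta}} - Q_{\beta})$ and applying the lemma to each endpoint:
$$\|\widetilde{Q}_{\beta}^{\texttt{Int}} - Q_{\beta}\|_\infty \le \Gamma(H+1)\big(\rho(\beta - \underline{\beta}) + (1-\rho)(\overline{\beta} - \beta)\big).$$
Substituting $\rho = (\beta-\underline{\beta})/(\overline{\beta}-\underline{\beta})$ and setting $d = \overline{\beta}-\underline{\beta}$, $x = \beta - \underline{\beta}\in[0,d]$, the bracket equals $(x^2 + (d-x)^2)/d$, which I would show is at most $d$ because $2x(x-d)\le 0$ on $[0,d]$. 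This collapses the interpolation term to $\Gamma(H+1)(\overline{\beta}-\underline{\beta})$, and adding the two terms yields the claimed bound.

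The main obstacle I anticipate is the Lipschitz lemma, specifically justifying the finiteness and the interchange of supremum and expectation in the episode-length bound: the length $K$ is random and a priori unbounded, so the argument relies on the stochastic-shortest-path assumption that optimal policies are proper with finite expected hitting time, which is precisely what guarantees $H<\infty$. Care is also needed to ensure that the sandwiching inequalities invoke an optimal policy on \emph{both} sides, so that the length bound $H$ (a supremum taken only over optimal policies) legitimately applies, rather than a generic policy whose expected length could be infinite.
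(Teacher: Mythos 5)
Your proposal is correct and follows essentially the same route as the paper: the identical triangle-inequality split into an estimation term (bounded by $\varepsilon$ via convexity) and an interpolation term (bounded via a Lipschitz-in-$\beta$ lemma with constant $\Gamma(H+1)$), differing only in that you sandwich optimal policies directly at the $Q$-level while the paper sandwiches at the $V$-level and adds one Bellman step, and in that you evaluate the interpolation bracket exactly rather than bounding each gap by $\overline{\beta}-\underline{\beta}$. Both variants yield the same constants and the same final bound.
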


What this bound means is: if the $Q$-learning in Algorithm~1 runs at the grid points and produces approximate estimates of the optimal $Q$-tables, which is ensured under standard conditions (e.g., sufficient exploration and appropriate stepsizes), then for any preference $\beta$ the policy that is greedy with respect to the interpolated table is guaranteed to be near–optimal. In particular, its deviation from the optimal $Q_\beta$ is uniformly controlled across all $\beta\in[0,1]$ by the sum of the per–grid learning accuracy $\epsilon$ and a term that depends only on the local grid gap $(\overline{\beta}-\underline{\beta})$ and fixed problem constants. Practically, this provides a simple design rule: choose the grid resolution to meet a target accuracy, trading off computational cost (number of grid points) against uniform suboptimality of the GIP policy.

\subsection{Proof of Theorem~\ref{Thm:1}}

 To prove Theorem~\ref{Thm:1}, we first establish a basic regularity property: the mapping $\beta \mapsto Q_{\beta}$ is \emph{Lipschitz continuous}.
\begin{lemma}
\label{lem:lipschitz-Q}
The optimal $Q$-table varies at most linearly with~$\beta$:
\[
\|Q_{\beta}-Q_{\beta'}\|_{\infty} \;\le\; \Gamma\,(H+1)\,|\beta-\beta'|\qquad \text{for all } \beta,\beta'\in[0,1],
\]
where $\Gamma$ is the reward-variation constant and $H$ is the worst-case $\beta$-optimal expected episode length.
\end{lemma}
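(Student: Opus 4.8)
The plan is to prove the Lemma by a policy-comparison (``simulation'') argument rather than through a Bellman contraction: the routing problem is an undiscounted stochastic shortest-path MDP, so the optimality operator need not be a sup-norm contraction, and the natural way to control value differences is through the expected episode length $H$. The argument rests on two structural facts. First, the scalarized reward is \emph{affine} in $\beta$, so the per-step reward gap between two preferences is uniformly small. Second, the transition kernel $T$ does not depend on $\beta$; hence any $\beta$-optimal policy induces the same trajectory law (and in particular the same episode-length statistics) no matter which reward is used to evaluate it. Together these let us compare the two optimal $Q$-tables along a single, common trajectory distribution.

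First I would bound the per-step reward gap. Writing the scalarized reward of Algorithm~\ref{alg:DPQ} as $r_\beta(s,a)=(1-\beta)\,r^{\rm{Energy}}(s,a)+\beta\,r^{\texttt{PDR}}(s,a)$, affinity gives
\[
r_\beta(s,a)-r_{\beta'}(s,a)=(\beta-\beta')\bigl(r^{\texttt{PDR}}(s,a)-r^{\rm{Energy}}(s,a)\bigr),
\]
so that $|r_\beta(s,a)-r_{\beta'}(s,a)|\le \Gamma\,|\beta-\beta'|$ for every $(s,a)$, by the definition of $\Gamma$ in Eq.~\eqref{eq:Gamma}.

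Next I would convert this uniform reward gap into a gap between the optimal $Q$-tables. Fix $\beta,\beta'$ and a pair $(s,a)$, let $\pi^*\in\Pi_\beta^*$, and let $\tilde\pi$ denote the policy that plays $a$ at the first step and then follows $\pi^*$. Since $\pi^*$ is optimal for $\beta$ we have $Q_\beta(s,a)=Q_\beta^{\tilde\pi}(s,a)$, while $\pi^*$ is merely feasible for $\beta'$, giving $Q_{\beta'}(s,a)\ge Q_{\beta'}^{\tilde\pi}(s,a)$. Because both terms are evaluated along the \emph{same} trajectory law (the transitions are $\beta$-independent), subtracting and inserting the reward gap yields
\[
Q_\beta(s,a)-Q_{\beta'}(s,a)\;\le\;\mathbb{E}^{\tilde\pi}\!\Bigl[\textstyle\sum_{k=0}^{K}\bigl(r_\beta-r_{\beta'}\bigr)(S^k,A^k)\,\Big|\,S^0=s,A^0=a\Bigr].
\]
Bounding each summand by $\Gamma\,|\beta-\beta'|$ and controlling the expected number of reward terms by the episode-length constant $H$ (up to the additive constant discussed below) gives a bound of the form $\Gamma(H+1)\,|\beta-\beta'|$. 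Exchanging the roles of $\beta$ and $\beta'$ (now comparing against a $\beta'$-optimal policy) produces the matching lower bound, and taking the supremum over $(s,a)$ yields the claim.

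The step I expect to be the main obstacle is making the horizon bookkeeping airtight in this undiscounted setting. Unlike the discounted case, boundedness of the sum is not automatic: it relies entirely on the comparison policy $\tilde\pi$ being \emph{proper}, i.e., reaching the terminal state in finite expected time, which is exactly what the standing assumption $H<\infty$ (proper optimal policies, bounded rewards) secures — and crucially this holds under either reward because $T$ is $\beta$-independent. The remaining care is in the constant: the number of reward terms along an episode is $\mathbb{E}[K+1]$, and one must track the forced first action of $\tilde\pi$ so that the total is controlled by $H$ and the final factor comes out as $(H+1)$ rather than a looser multiple. Once this counting is pinned down, the Lipschitz estimate, and hence Lemma~\ref{lem:lipschitz-Q}, follows.
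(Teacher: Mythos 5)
Your proposal is correct and follows essentially the same route as the paper: both arguments rest on the affine dependence of $r_\beta$ on $\beta$, the $\beta$-independence of the transition law, and a policy-feasibility comparison that evaluates a $\beta$-optimal policy under the $\beta'$-reward, with the expected episode length supplying the factor $H$ and the forced first step the extra $+1$. The only cosmetic difference is that you carry out the comparison directly at the $Q$-level via the augmented policy $\tilde\pi$, whereas the paper first proves the value-function bound $\|V_\beta-V_{\beta'}\|_\infty \le \Gamma\,H\,|\beta-\beta'|$ and then adds the one-step reward gap through the identity $Q_\beta(s,a)=r_\beta(s,a)+\mathbb{E}[V_\beta(s')\mid s,a]$ to obtain the $(H+1)$ factor.
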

\begin{proof}
    Recall that the scalarized reward for  $\beta\in[0,1]$ is a convex combination
\[
r_\beta(s,a)=\beta r^{\rm{Energy}}(s,a)+(1-\beta) r^{\texttt{PDR}}(s,a).
\]
Then, recalling the definition of $\Gamma$ in Eq.~\eqref{eq:Gamma}, for all $(s,a)$ and all $\beta,\beta'\in[0,1]$,
\begin{align*}
|r_\beta(s,a)-r_{\beta'}(s,a)|
=& |\beta-\beta'|\;|r^{\rm{Energy}}(s,a)-r^{\texttt{PDR}}(s,a)| \\
\le& \Gamma\,|\beta-\beta'|.
\end{align*}


Fix a stationary policy $\pi$ and an initial state $s$. Let $K$ denote (random) episode length until termination. We have that
\begin{align*}
\big|V_\beta^\pi(s)-V_{\beta'}^\pi(s)\big|
=& \Big|\mathbb{E}_s^\pi\Big[\sum_{t=0}^{K-1}\big(r_\beta(s_t,a_t)-r_{\beta'}(s_t,a_t)\big)\Big]\Big| \\
\le& \Gamma\,|\beta-\beta'|\,\mathbb{E}_s^\pi[K].
\end{align*}
Let $\pi_\beta\in\Pi_\beta^*$ be $\beta$-optimal. Then, for any $s$,
\begin{align*}
V_\beta(s)-V_{\beta'}(s)
\le& V_\beta^{\pi_\beta}(s)-V_{\beta'}^{\pi_\beta}(s) 
\le \Gamma\,|\beta-\beta'|\,\mathbb{E}_s^{\pi_\beta}[K] \\
\le& \Gamma\,H\,|\beta-\beta'|.
\end{align*}
Swapping $\beta$ and $\beta'$ gives the reverse inequality; hence
\[
\|V_\beta-V_{\beta'}\|_\infty \le \Gamma\,H\,|\beta-\beta'|.
\tag{3}
\]

Finally, for any $(s,a)$ and random next state $s'\sim T(s,a)$ 
\[
\begin{aligned}
|Q_\beta(s,a)-Q_{\beta'}(s,a)|
&\leq \big|r_\beta(s,a)-r_{\beta'}(s,a)\big| \\
  &~~~~~~~~~~ + \Big|\mathbb{E}\big[V_\beta(s')-V_{\beta'}(s')\mid s,a\big]\Big| \\
&\le \Gamma\,|\beta-\beta'| + \|V_\beta-V_{\beta'}\|_\infty \\
&\le \Gamma\,(1+H)\,|\beta-\beta'|.
\end{aligned}
\]
Taking the maximum over $(s,a)$ yields
\[
\|Q_\beta^*-Q_{\beta'}^*\|_\infty \le \Gamma\,(1+H)\,|\beta-\beta'|.
\]
\end{proof}
We are now ready to prove Theorem~\ref{Thm:1}.  By the triangle inequality, we can slit the error in the following two terms:
\begin{align*}
    ||Q_{\beta}^{\texttt{Int}}-Q_{\beta}||_{\infty} \leq&
\underbrace{\big\|\rho\big(\widehat Q_{\underline{\beta}}-Q_{\underline{\beta}}\big)
+(1-\rho)\big(\widehat Q_{\overline{\beta}}-Q_{\overline{\beta}}\big)\big\|_{\infty}}_{T_1} \\
&~\;+\;
\underbrace{\big\|\rho\,Q_{\underline{\beta}}+(1-\rho)\,Q_{\overline{\beta}}-Q_{\beta}\big\|_{\infty}}_{T_2}.
 %
\end{align*}
To bound $T_1$ we use triangle inequality and Assumption~\ref{assump:pergrid-accuracy} to obtain
$
T_1 \;\le\;\rho\,\varepsilon+(1-\rho)\,\varepsilon\;=\;\varepsilon.$
To bound $T_2$, we note that by Lemma~\ref{lem:lipschitz-Q}, $\|Q_{\underline{\beta}}-Q_{\beta}\|_\infty \le \Gamma (H+1)\,|\beta-\underline{\beta}|$ and $\|Q_{\overline{\beta}}-Q_{\beta}\|_\infty \le \Gamma (H+1)\,|\overline{\beta}-\beta|$. 
Hence,
\begin{align*}
T_2 & \;\le\;\rho\,\Gamma (H+1)\,|\beta-\underline{\beta}|+(1-\rho)\,\Gamma (H+1)\,|\overline{\beta}-\beta|
\\ & \;\le\;\Gamma (H+1)\,(\overline{\beta}-\underline{\beta}).
\end{align*}
Combining these results, yields the result
$||Q_{\beta}^{\texttt{Int}}-Q_{\beta}||_{\infty}\leq T_1+T_2\leq \epsilon + \Gamma (H+1)\,(\overline{\beta}-\underline{\beta})$

\color{black}

\subsection{Limitations and Complexity Analysis}

The DPQ-Learning framework provides strong adaptability but introduces several computational and practical trade-offs:

\begin{itemize}
    \item \textbf{Memory complexity: $O(n \cdot |S| \cdot |A|)$} \\
    The algorithm maintains one Q-table for each preference grid value, leading to memory usage that grows linearly with the number of grid points $n = |\mathcal{B}|$. This is manageable for moderate $n$ but may limit deployment on ultra-low-power IoT nodes.

    \item \textbf{Per-step update complexity: $O(n)$} \\
    At each interaction step, all $n$ Q-tables are updated in parallel using the same transition sample. Although each update is lightweight, the total per-step cost increases linearly with $n$.

    \item \textbf{Action-selection overhead: $O(|A|)$} \\
    The greedy interpolation policy requires evaluating two Q-values per action from neighboring grid points. Since interpolation is constant-time, the action-selection cost remains $O(|A|)$, similar to single-objective Q-learning.

    \item \textbf{Dependence on grid resolution} \\
    Finer grids improve interpolation fidelity but incur higher memory and update costs, whereas coarser grids reduce resource usage at the expense of approximation error. This creates a tunable accuracy--complexity trade-off.
\end{itemize}

These considerations emphasize that the grid size and learning configuration should be tuned according to the computational constraints of the target IoT deployment. It should be noted that in reinforcement learning—particularly in multi-objective settings—the total runtime is inherently stochastic and depends on environment interactions, episode lengths, and convergence behavior. Thus, a closed-form time-complexity expression for the overall learning process is not provided. As already captured in the above discussion through memory requirements, per-step update costs, action-selection overhead, and grid-resolution trade-offs, computational performance is therefore best characterized via empirical runtime and sample efficiency rather than analytical bounds.

\color{black}

\section{Experimental Results}
\label{sec:Simulations}

We compare the efficiency
of the proposed Distributed DPQ-Learning algorithm with the following baselines: 
\begin{enumerate}
    \item Reinforcement-Learning-Based Energy
Efficient Control and Routing Protocol for
Wireless Sensor Networks \textbf{  (\texttt{RLBEEP})} \cite{abadi2022rlbeep}
    
    \item Reinforcement Learning for Life Time Optimization \textbf{  (\texttt{R2LTO})} \cite{bouzid2020efficient}
    
    \item Reinforcement Learning Based Routing \textbf{ (\texttt{RLBR})} \cite{guo2019optimizing}

    \item Fuzzy multiobjective routing for maximum lifetime and minimum delay \textbf{
(\texttt{FMOLD})} \cite{5158843}

    \item Recursive Shortest Path Algorithm based Routing \textbf{ (\texttt{RSPAR})} \cite{7435250}

   \item Static MultiObjective Reinforcement-Learning based Routing \textbf{  (\texttt{SMORLR})} 
    
\end{enumerate}

\texttt{SMORLR} is the brute force version of MORL-based routing with the same objectives as the proposed method. But, it starts learning from scratch when new changes occur in preferences.

\subsection{The Simulation Setup and Parameters}

The experiments are conducted using a custom Python-based simulator (Python~3.9.13) executed on an 11th Gen Intel(R) Core(TM) i7-1185G7 CPU @ 3.00GHz with 32~GB RAM on Windows. The simulated network follows an IEEE~802.15.4 MAC abstraction over a $10\times10$ grid topology ($100$ nodes), where each grid cell corresponds to an IoT node and edges represent one-hop connectivity. This corresponds to a unit-disk communication model where each node can directly communicate with its four adjacent neighbors (one-hop radio range). With the adopted local DPQ representation, the simulated per-node routing memory is approximately $1.72$~KB for the fine preference grid ($|\mathcal{B}|=11$) and $320$~bytes for the coarse grid ($|\mathcal{B}|=2$), assuming 64-bit floating-point Q-values.

The sink is fixed at the bottom-right node, while the packet source node is randomly selected at the beginning of each episode. Each episode corresponds to routing a single packet to the sink, and routing terminates either when the sink is reached or when the packet enters a pre-defined unreliable node. Unreliable nodes emulate forwarding failures by dropping the packet with probability $p_{\text{drop}}$, consistent across all methods. The data packet size is $133$~Bytes and per-packet overhead is $30$~Bytes. Energy consumption is modeled by subtracting $E_{idle}=0.5$~nJ per time-step for all nodes (idle/active cost) and $E_{tx}=0.007$~mJ for each forwarding hop, with initial node energy set to $25.0$~mJ. Additional parameters are set to $\alpha=0.9$, and $E_{awake}=0.5\,\mu$J. 

We evaluate four settings defined by two orthogonal factors: \emph{(i)} exploration strategy and \emph{(ii)} preference-change frequency, as summarized in Fig.~\ref{fig:Exp_setting_dynamic_online}. 
For exploration, Experiments~1--2 use \emph{sequential exploration/exploitation}, where the proposed algorithm applies $\epsilon=1$ for the first $1000$ episodes (exploration) and $\epsilon=0$ thereafter (exploitation). Experiments~3--4 use \emph{simultaneous exploration/exploitation}, where $\epsilon$ is initialized at $1$ and linearly decayed to $0$ over time.
For preference variation, Experiments~1 and~3 update the preference periodically every $1000^{\text{th}}$ episode (periodically changing objectives), while Experiments~2 and~4 update the preference at every episode (rapidly changing objectives). 
For the \texttt{SMORLR} baseline, $\epsilon$ is reset to $1$ after each preference update and then linearly decayed to $0$, enabling fresh exploration under the new objective.



\begin{figure}[h]
\begin{subfigure}[b]{0.24\textwidth} 
         \centering
    \includegraphics[width=\textwidth]{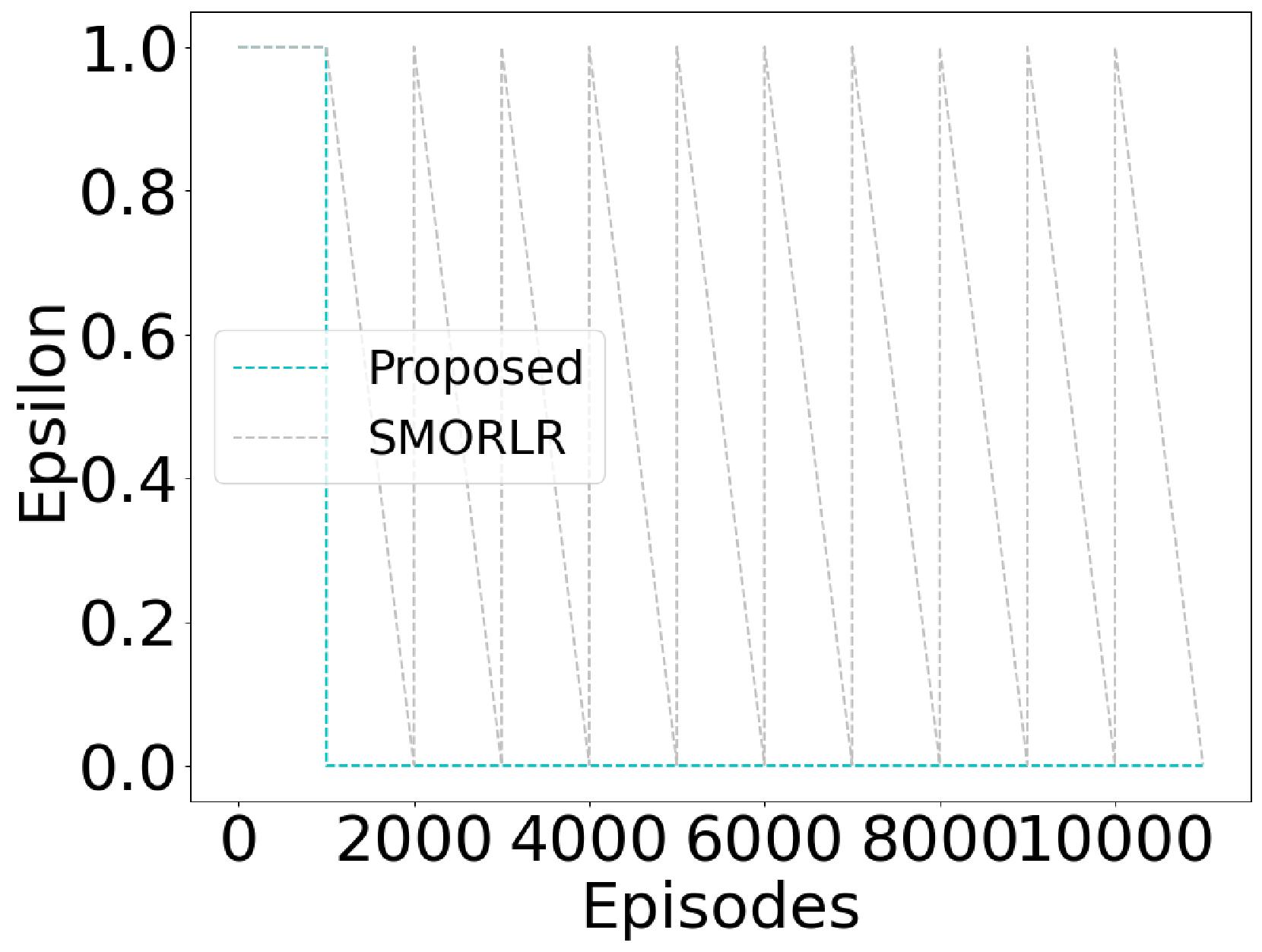}
\caption{Epsilon variation in experiments 1 and 2}\label{fig:epsilonVariation1}
\end{subfigure}
      \hfill
       \begin{subfigure}[b]{0.24\textwidth} 
         \centering
         \includegraphics[width=\textwidth]{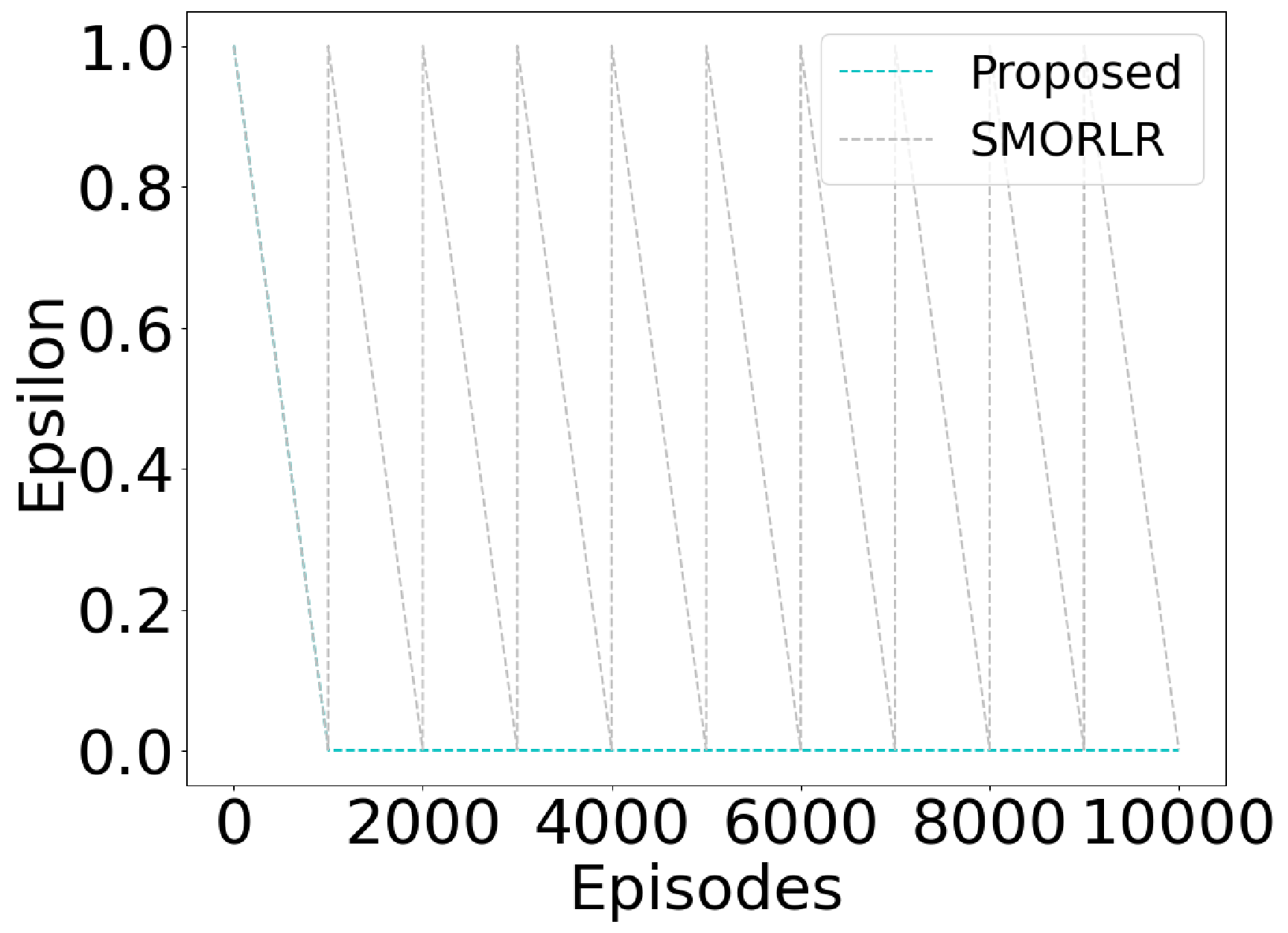}
\caption{Epsilon variation in experiments 3 and 4}\label{fig:epsilonVariation2}
     \end{subfigure}
      \vfill
     \begin{subfigure}[b]{0.24\textwidth} 
         \centering
         \includegraphics[width=\textwidth]{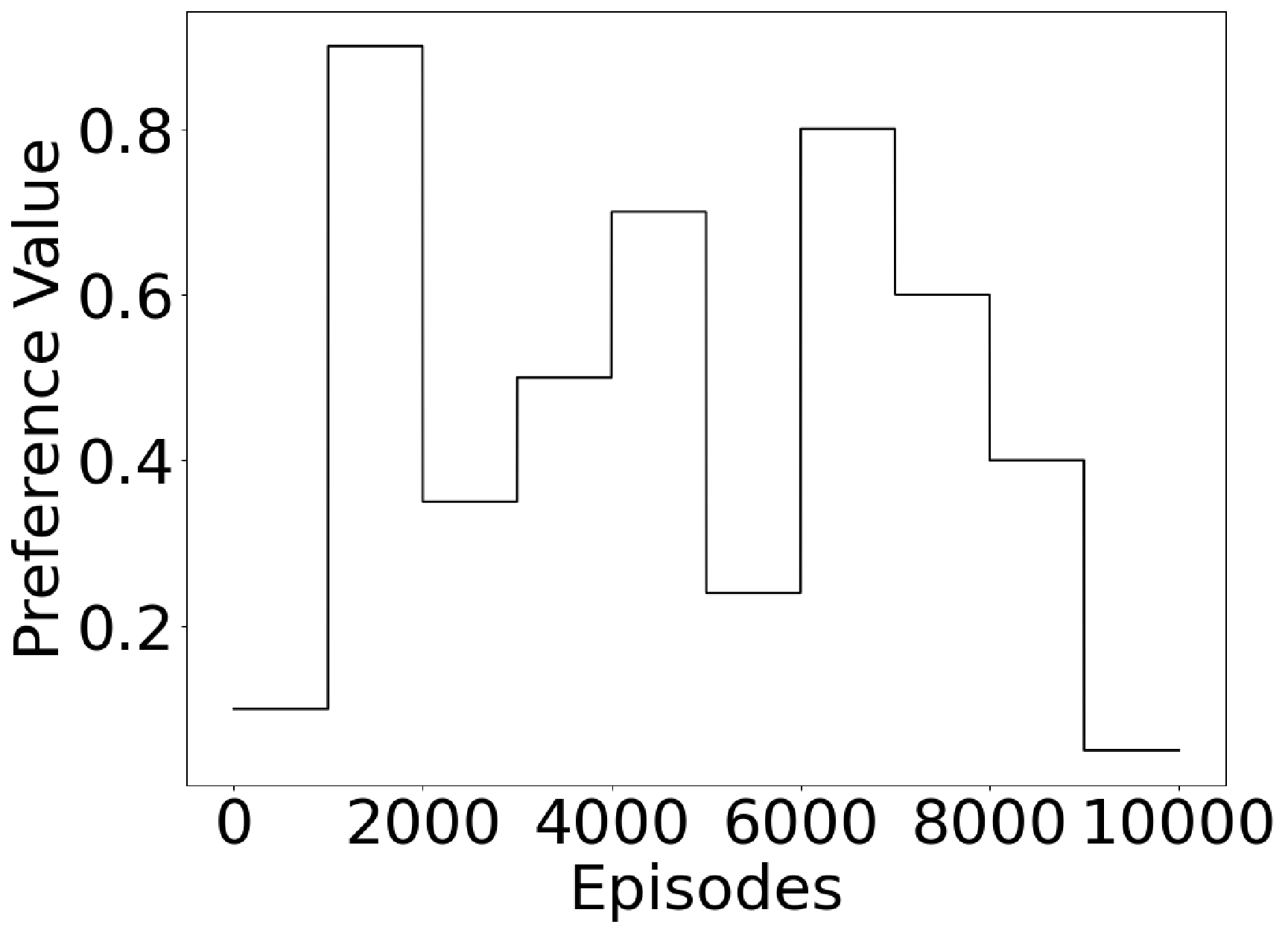}
\caption{Preference Variation in experiments 1 and 3}
\label{fig:preferenceVariation1}
     \end{subfigure}
     \hfill
     \begin{subfigure}[b]{0.24\textwidth} 
         \centering
         \includegraphics[width=\textwidth]{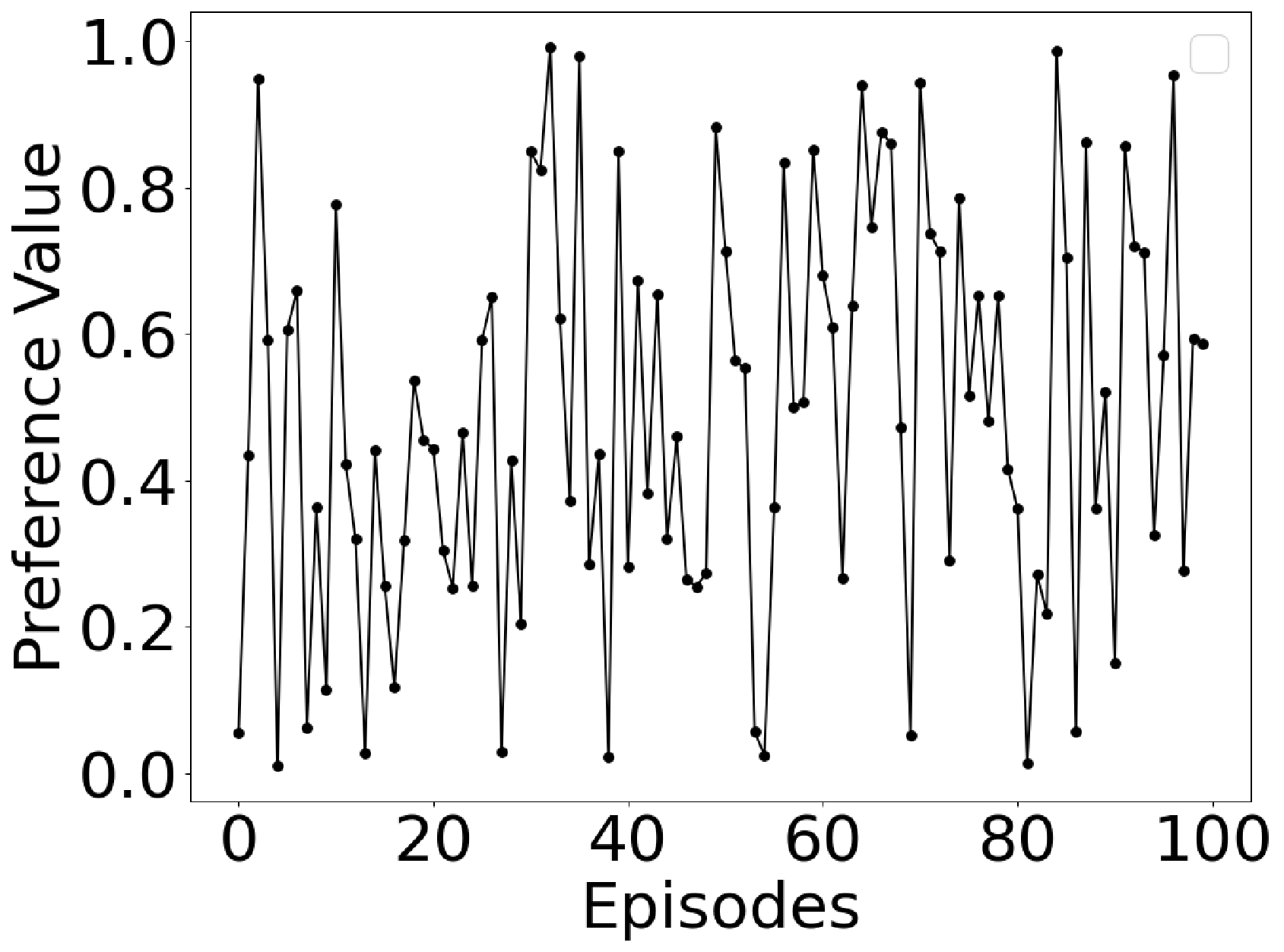}
\caption{Preference Variation in experiments 2 and 4}
\label{fig:preferenceVariation2}
     \end{subfigure}
   \caption{Variation of Epsilon and Preferences during the experimental simulations.}
        \label{fig:Exp_setting_dynamic_online}
\end{figure}

In the simulations, transmission priorities are represented by the preference parameter $\beta$, which adjusts the balance between packet delivery ratio (PDR) and energy efficiency. Lower $\beta$ values (e.g., $\beta = 0.1$) correspond to high-priority transmissions that favor higher PDR and reliability, even at the cost of increased energy consumption. In contrast, higher $\beta$ values (e.g., $\beta = 0.9$) represent low-priority transmissions that prioritize energy savings over delivery reliability. We compare the performance of the proposed Distributed DPQ-Learning routing algorithm with the six baselines on three vital metrics: overall reward, packet delivery, and energy consumption. The following subsections provide detailed analysis of these three metrics under sequential and simultaneous exploration-exploitation scenarios. 

\subsection{Sequential Exploration and Exploitation}\label{dynamic-offline}
We provide a detailed performance analysis of the proposed work based on a sequential exploration and exploitation scheme with varying frequency of preference variation at every thousandth episode (in Experiment 1) and every episode (in Experiment 2). In the below subsections, we compare the performance of the proposed work and baseline algorithms using overall reward, packet delivery, and energy consumption.

\subsubsection{Overall Reward}\label{dynamic-offline-reward}
The overall reward captures the optimization goal by comprising both objectives as per the current episode's preference $\beta_m$. At an episode $m$, when the state is $s$, and a routing action $a$ is taken, the overall reward is $r^{\texttt{Overall}}$ is defined as:
\begin{equation}\label{eq:overallReward}
r^{\texttt{Overall}}(s, a)= \beta_m r^{\texttt{Energy}}(s, a)  +(1-\beta_m)r^{\texttt{PDR}}(s, a)
\end{equation}

\begin{figure*}[!t]
     \begin{subfigure}[b]{0.32\textwidth} 
         \centering
         \includegraphics[width=\textwidth]{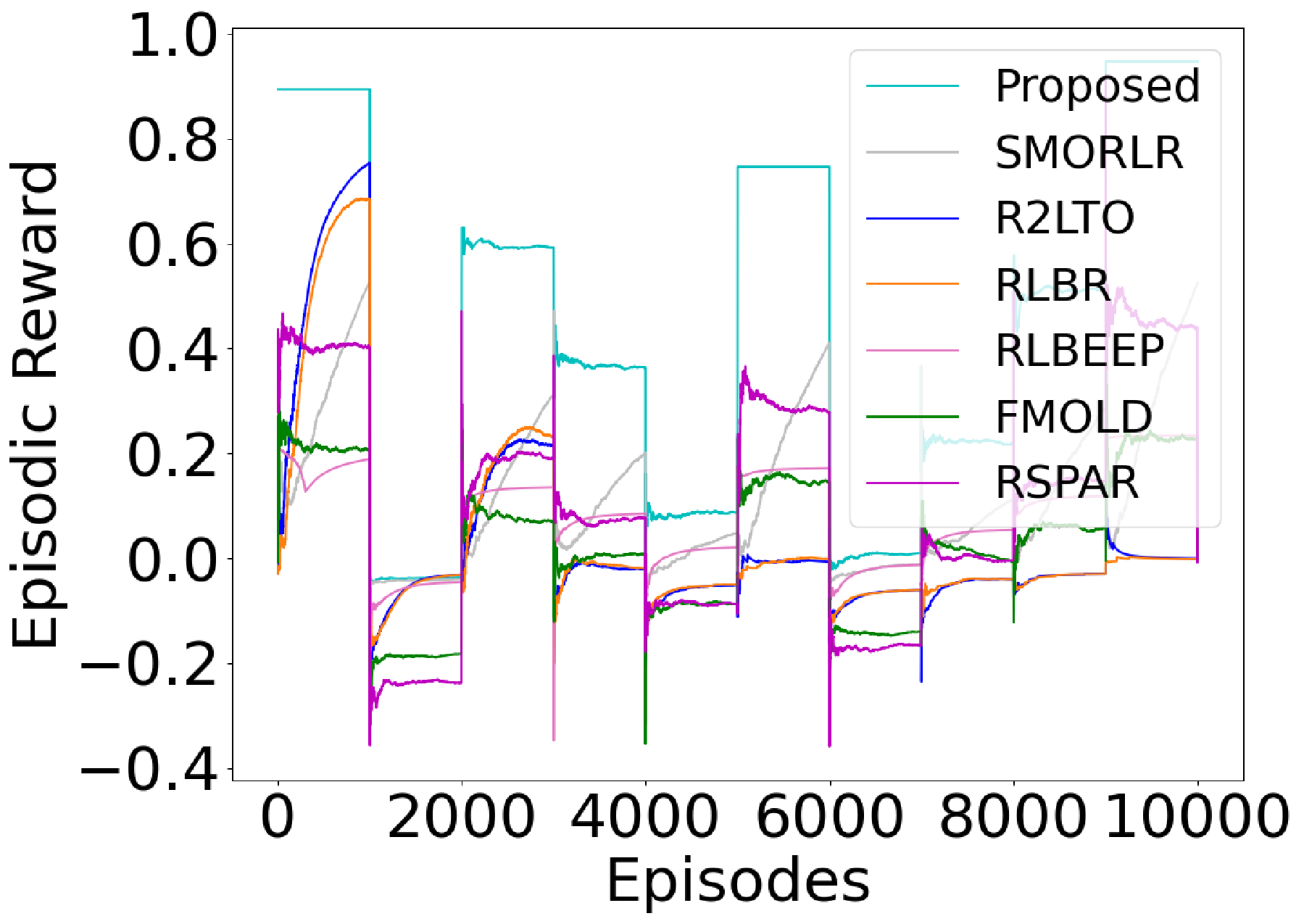}
\caption{Experiment 1: Comparison of the episodic rewards}\label{fig:reward_dynamic_offline}
     \end{subfigure}
     \hfill
      \begin{subfigure}[b]{0.32\textwidth} 
         \centering
         \includegraphics[width=\textwidth]{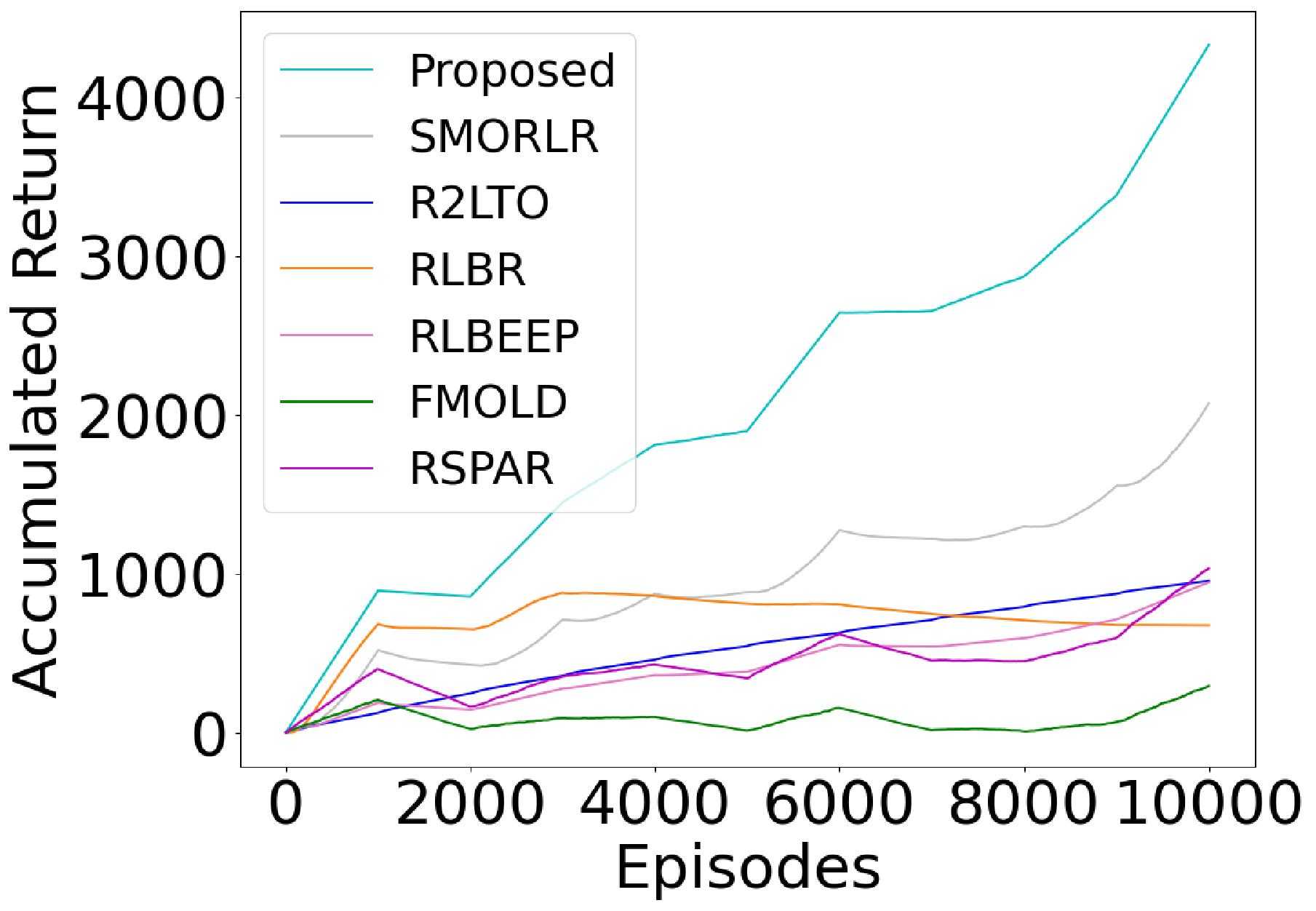}
\caption{Experiment 1: Comparison of accumulated returns}\label{fig:accumulated_offline}
     \end{subfigure}
      \hfill
\begin{subfigure}[b]{0.32\textwidth}
         \centering
         \includegraphics[width=\textwidth]{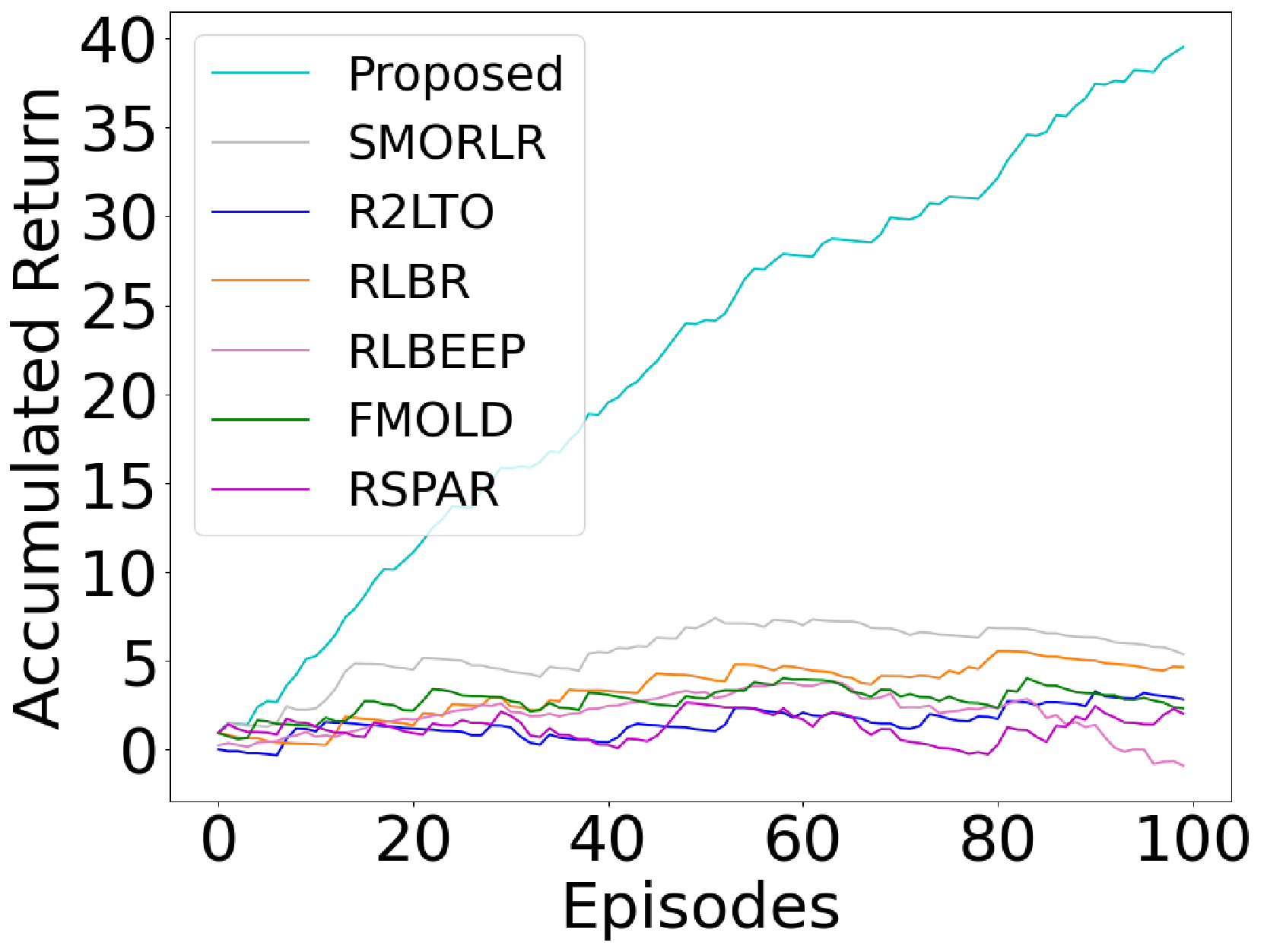}
         \caption{Experiment 2: Accumulated Return v/s Episodes}
         \label{fig:accumulated_Return_offline_cont_change}
     \end{subfigure}
   \caption{Overall reward in Sequential exploration-exploitation scheme.}
        \label{fig:Results_Overall_reward_Sequential}
\end{figure*}

Figure~\ref{fig:Results_Overall_reward_Sequential} shows the performance comparison of $r^{\texttt{Overall}}(s, a)$ in the sequential exploration-exploitation scheme between the proposed and existing works. Figure \ref{fig:reward_dynamic_offline} and Figure \ref{fig:accumulated_offline} shows experiment 1's episodic and accumulated rewards, respectively. 
Figure \ref{fig:reward_dynamic_offline} shows that the proposed Distributed DPQ-Learning routing algorithm immediately adapts according to the preference and gives high performance even if the preferences change. This superior performance is attributed to the Pareto front, which is learned during the exploration phase. In contrast \texttt{SMORLR} starts learning the policy suited to the new preference after each preference shift. Although it focuses on the same objectives, it still does not learn enough to outperform the proposed dynamic protocol after 1000 episodes of learning. Static protocols perform poorly in environments where preferences change dynamically. Algorithms like \texttt{R2LTO}, \texttt{RLBR}, and \texttt{RLBEEP}, which are primarily energy-focused and static, display performance that is relatively similar to one another for most preferences. However, they underperform by an average of 102\% compared to the proposed approach. During the intervals between $1000-2000$ and $6000-7000$ episodes, these algorithms begin to show improvement, as the preference weight, $\beta$, shifts to $0.1$ and $0.2$, giving higher priority to energy efficiency--an objective for which they are optimized. \texttt{RSPAR} underperforms by 263\% on average compared to the proposed method, while \texttt{FMOLD}, despite being a multi-objective approach, is 266\% less efficient. Moreover, they don't seem to be learning efficient policies with time. However, the proposed algorithm performs better immediately (from the beginning of these intervals) because it utilizes the Pareto front learned during exploration. 

The cumulative reward is shown in Figure \ref{fig:accumulated_offline}, and confirm superiority in the proposed distributed DPQ-Learning algorithm over baseline approaches over the long run. Through the episodes, this outperformance becomes increasingly evident. Finally, after 10000 episodes, we can notice that the accumulated return for the proposed method is better than \texttt{SMORLR} by $2.3\times$, \texttt{RLBEEP} by $5.12\times$, \texttt{R2LTO} and \texttt{FMOLD} by $5.25\times$, \texttt{RLBR} by $6\times$, and \texttt{RSPAR} by $16.8\times$. Similarly, in experiment 2, preferences were randomly changed in each episode and the measured performance is shown in Figure~\ref{fig:accumulated_Return_offline_cont_change}. Here we noticed that the accumulated returns of the proposed routing protocol are consistently higher than baselines, and the difference becomes larger with more episodes. At the end of $100$ episodes, the accumulated return of the proposed algorithm is around $43$ which is $3.3\times$ of \texttt{RSPAR}, $4.3\times$ of \texttt{SMORLR}, $6.14\times$ of \texttt{RLBR}, $7.16\times$ of \texttt{R2LTO}, $8.6\times$ of \texttt{FMOLD} and $43\times$ of \texttt{RLBEEP}. Our observations confirm the proposed method consistently outperforms all baselines and adapts to this efficient performance irrespective of preference changes. 


\subsubsection{Energy}\label{dynamic-offline-energy}
\begin{figure*}[!t]
  \begin{subfigure}[b]{0.32\textwidth} 
         \centering
         \includegraphics[width=\textwidth]{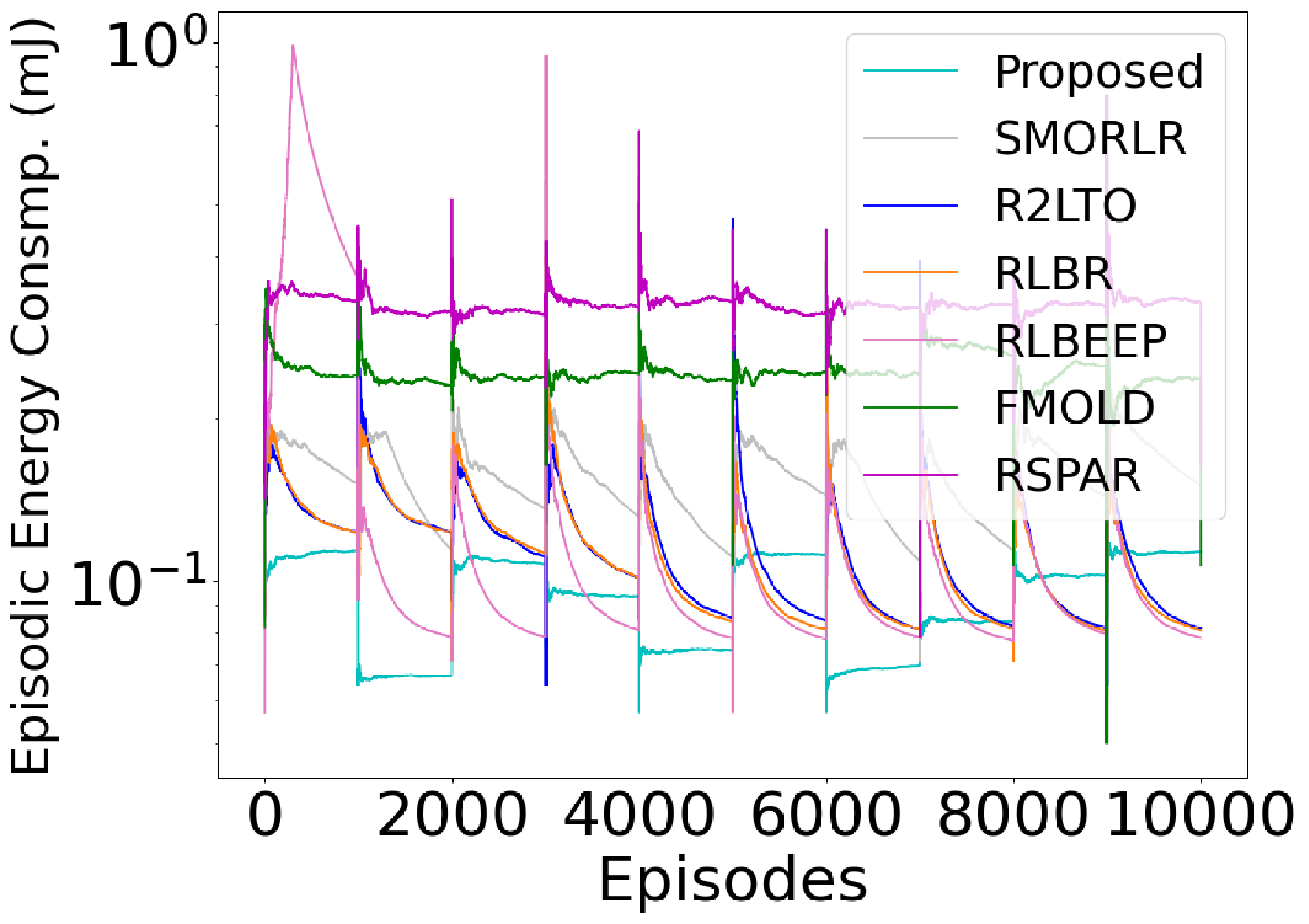}
         \caption{Experiment 1: Episodic Energy Consumption}
         \label{fig:energy_dynamic_offline}
     \end{subfigure}
      \hfill
    \begin{subfigure}[b]{0.32\textwidth} 
         \centering
         \includegraphics[width=\textwidth]{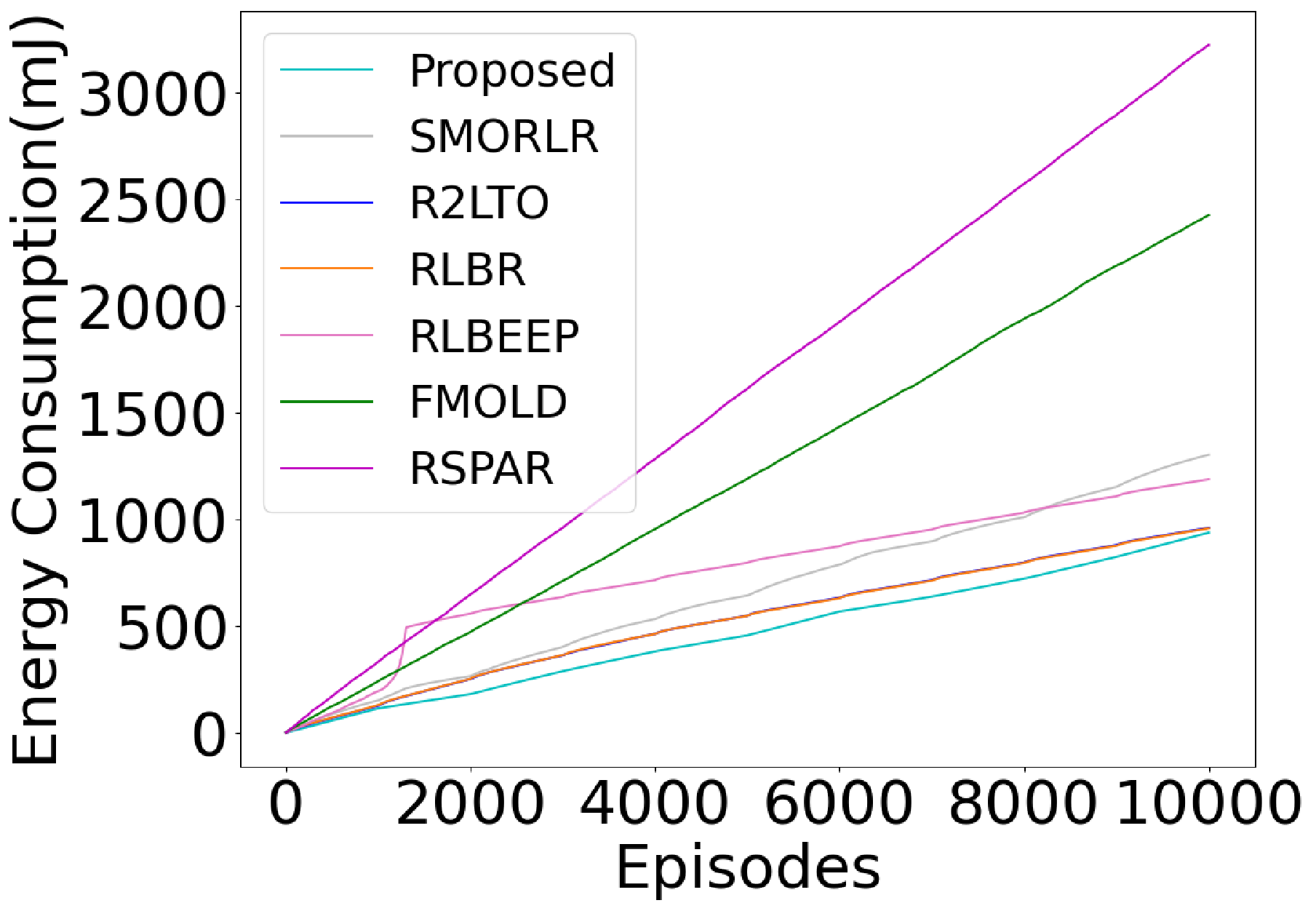}
\caption{Experiment 1: Comparison of total energy consumption} \label{fig:offline_total_Energy}
     \end{subfigure}
     \hfill
     \begin{subfigure}[b]{0.32\textwidth} 
         \centering
         \includegraphics[width=\textwidth]{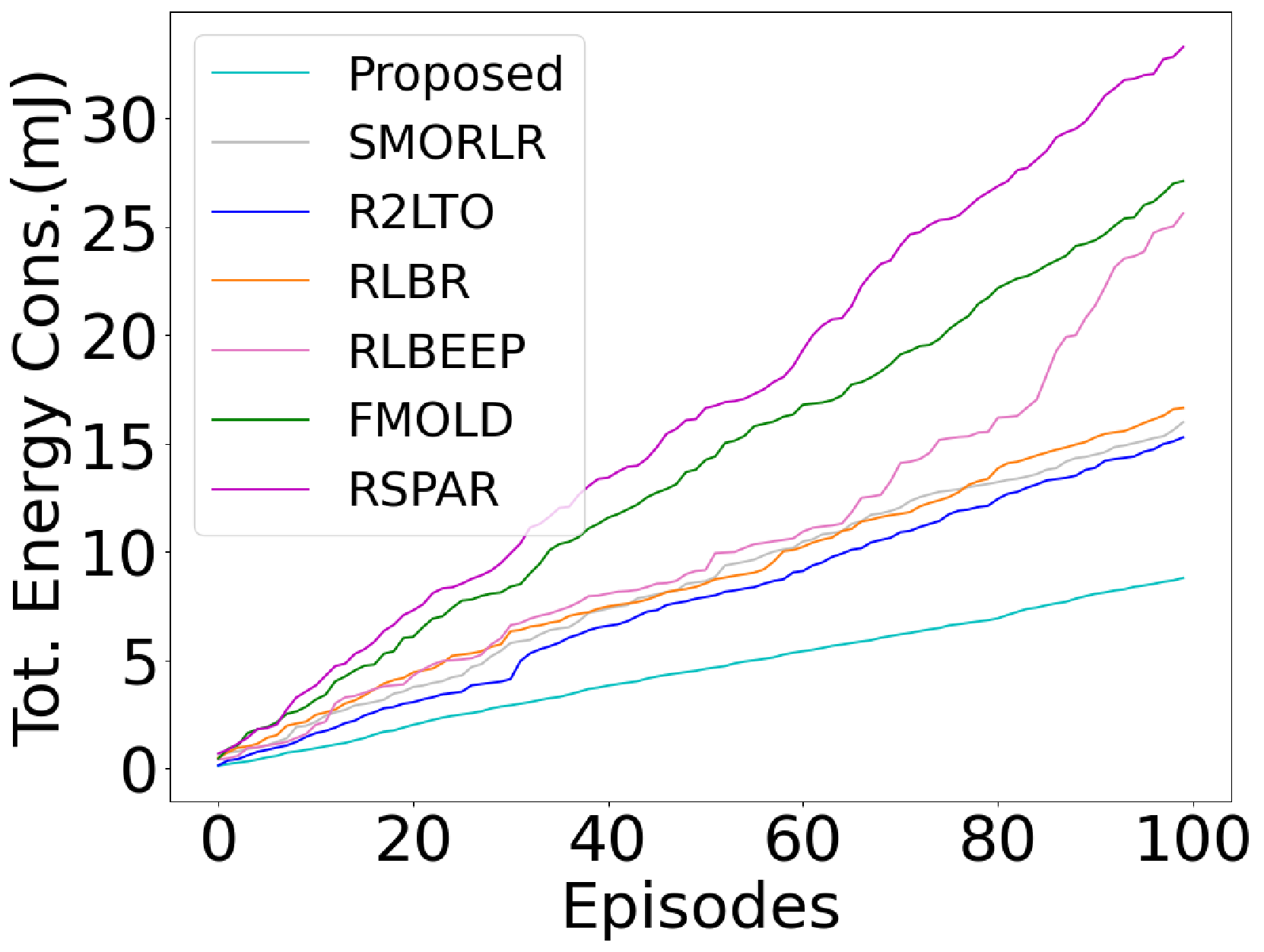}
         \caption{Experiment 2: Total Energy consumption v/s Episodes}
         \label{fig:total_energy_offline_cont_change}
     \end{subfigure}
   \caption{Energy Consumption in the Sequential exploration-exploitation scheme.}
        \label{fig:Results_Energy_Concumption}
\end{figure*}
Figure~\ref{fig:energy_dynamic_offline} compares episodic energy consumption (in mJ) across baseline algorithms over $10,000$ episodes, showing their performance trends. The Y-axis in Figure~\ref{fig:energy_dynamic_offline} is logarithmic, ranging from $0$ to $1$ mJ. The proposed distributed DPQ-Learning routing protocol leads to minimal episodic energy consumption whenever a higher preference is given to the energy objective while performing reasonably well even when the preference for the energy objective is low. On average, the proposed distributed DPQ-Learning routing protocol shows about $80-90$\% lower energy consumption than baseline algorithms. \texttt{SMORLR} consumes around $150-200\%$ more energy than the proposed algorithm over time. However, with each preference change, it starts to learn to converge closer to the proposed algorithm gradually. However, the more often preferences change, the inefficient this approach becomes. \texttt{R2LTO} and \texttt{RLBR} tend to stabilize close to $0.3$ mJ of episodic energy consumption, with about $50\%$ higher consumption than the proposed method. \texttt{RLBEEP} spikes at first and then stabilizes around $0.4$ mJ. It has approximately $100\%$ higher consumption than the proposed algorithm. \texttt{FMOLD} stabilizes near $0.6$ mJ, about $3$ times the energy of the distributed DPQ-Learning approach. The energy consumption is the highest among the four, but the consistency is better than \texttt{RSPAR}. The lowest energy efficiency can be seen in \texttt{RSPAR}, which spikes above $1 $ mJ initially and stabilizes around $0.8$ mJ. Its energy consumption is $4\times$ to $5\times$ higher than the proposed algorithm. Although \texttt{RLBEEP}, \texttt{R2LTO}, and \texttt{RLBR} learn to perform well or better than the proposed method for certain sets of preferences, but their overall performance over a long period is lesser than the proposed method.

The cumulative effects over the long run are shown in Figure~\ref{fig:offline_total_Energy}, which depicts cumulative energy consumption over episodes. The figure shows that the proposed algorithm is the most energy-efficient, consuming around $4\times$ and $3.4\times$ less energy than \texttt{RSPAR} and \texttt{FMOLD}, respectively, $30$--$50\%$ less than \texttt{SMORLR} and \texttt{RLBEEP}, and $3$--$5\%$ less than \texttt{RLBR} and \texttt{R2LTO} by episode $10000$. While \texttt{SMORLR}, \texttt{RLBEEP}, \texttt{R2LTO}, and \texttt{RLBR} perform moderately well, \texttt{RSPAR} and \texttt{FMOLD} exhibit the highest cumulative energy consumption, highlighting their inefficiency. Nonetheless, the proposed routing algorithm consumes less energy than all baselines. We further evaluate Experiment~2, where preferences are randomly changed each episode. As shown in Figure~\ref{fig:total_energy_offline_cont_change}, the proposed protocol again achieves lower cumulative energy consumption, with the gap widening over time. By the end of 100 episodes, the cumulative energy consumption of \texttt{RSPAR} and \texttt{RLBEEP} is $3.87\times$, \texttt{FMOLD} is $3.53\times$, \texttt{RLBR} is $2\times$, \texttt{R2LTO} is $1.73\times$, and \texttt{SMORLR} is $1.67\times$ that of the proposed routing scheme. Therefore, regardless of exploration policy or preference variation pattern, the proposed method results in substantially lower energy consumption than the baseline algorithms.

\subsubsection{Packet Delivery}\label{dynamic-offline-pdr}
Figure \ref{fig:delivery_dynamic_offline} and Figure \ref{fig:total_delivery_offline} depict the variation in episodic and cumulative packet deliveries, respectively. Figure \ref{fig:delivery_dynamic_offline} shows that the proposed distributed DPQ-Learning routing algorithm outperforms all the baselines for episodes ($0-1000, 2000-4000, 5000-6000, 7000-10000$) where the preference for the \texttt{PDR} objective is more than 0.5. \texttt{RSPAR} and \texttt{FMOLD} tend to hover around episodic \texttt{PDR} of $0.4-0.5$ and $0.3-0.4$ respectively. \texttt{SMORLR} tends to learn with time to maximize the \texttt{PDR} for each new preference vector. However, its peak performances are $50\%$ lower than the proposed method's consistent performances. \texttt{RLBEEP}, \texttt{R2LTO}, and \texttt{RLBR} focus on the energy aspect of the system and did not consider \texttt{PDR}. As a result, they consistently ignore this objective and performing worse. At the beginning, \texttt{R2LTO} and \texttt{RLBR} seem to perform fairly well against \texttt{PDR}, but eventually, they start performing poorly as well. 

\begin{figure*}[!t]
\centering
     \begin{subfigure}[b]{0.32\textwidth} 
         \centering
         \includegraphics[width=\textwidth]{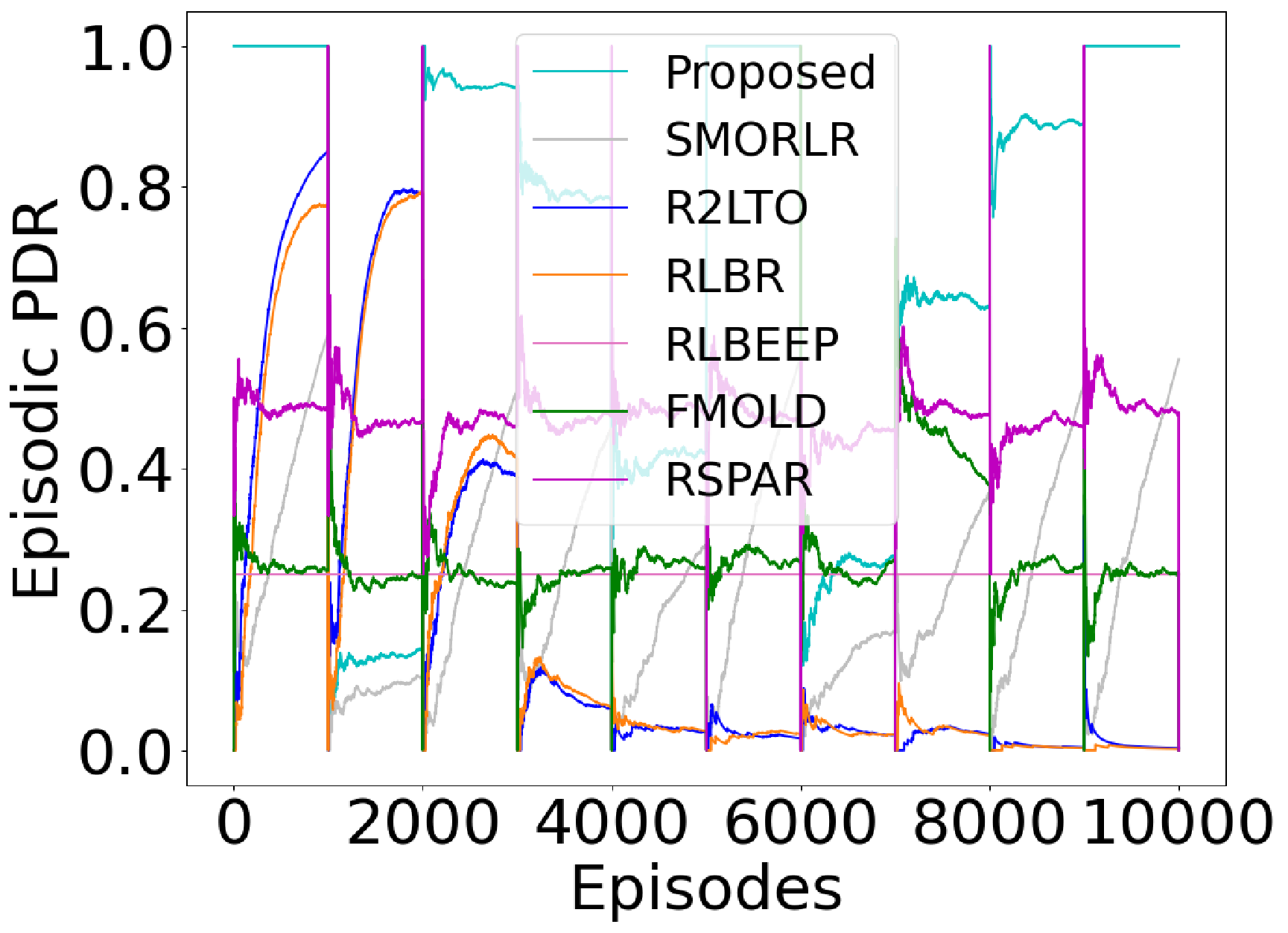}
         \caption{Experiment 1: Episodic PDR}
         \label{fig:delivery_dynamic_offline}
     \end{subfigure}
     \hfill
     \begin{subfigure}[b]{0.32\textwidth} 
         \centering
         \includegraphics[width=\textwidth]{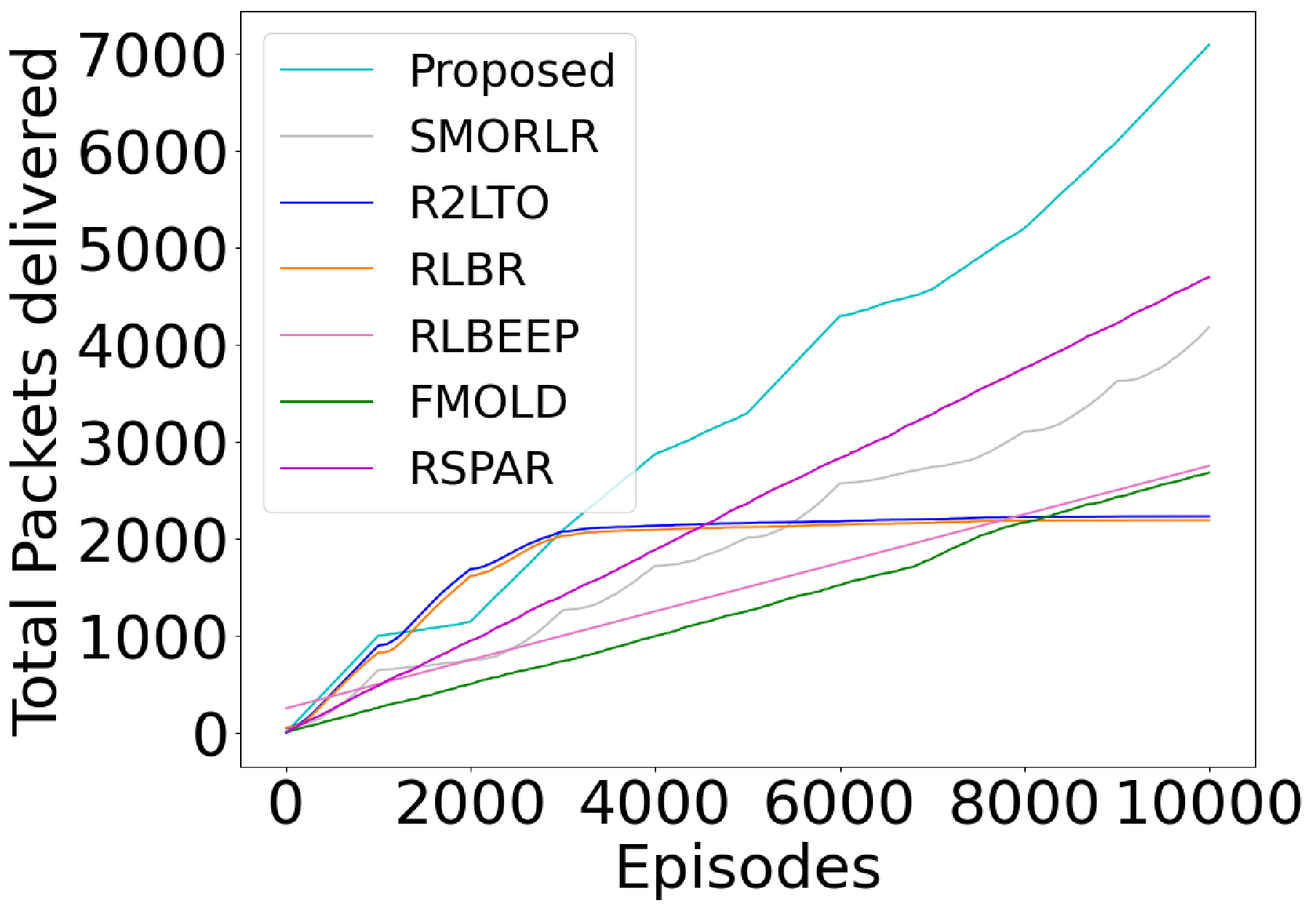}
         \caption{Exp. 1: Packets Delivered v/s Episodes}
         \label{fig:total_delivery_offline}
     \end{subfigure}
     \hfill
     \begin{subfigure}[b]{0.32\textwidth}
         \centering
         \includegraphics[width=\textwidth]{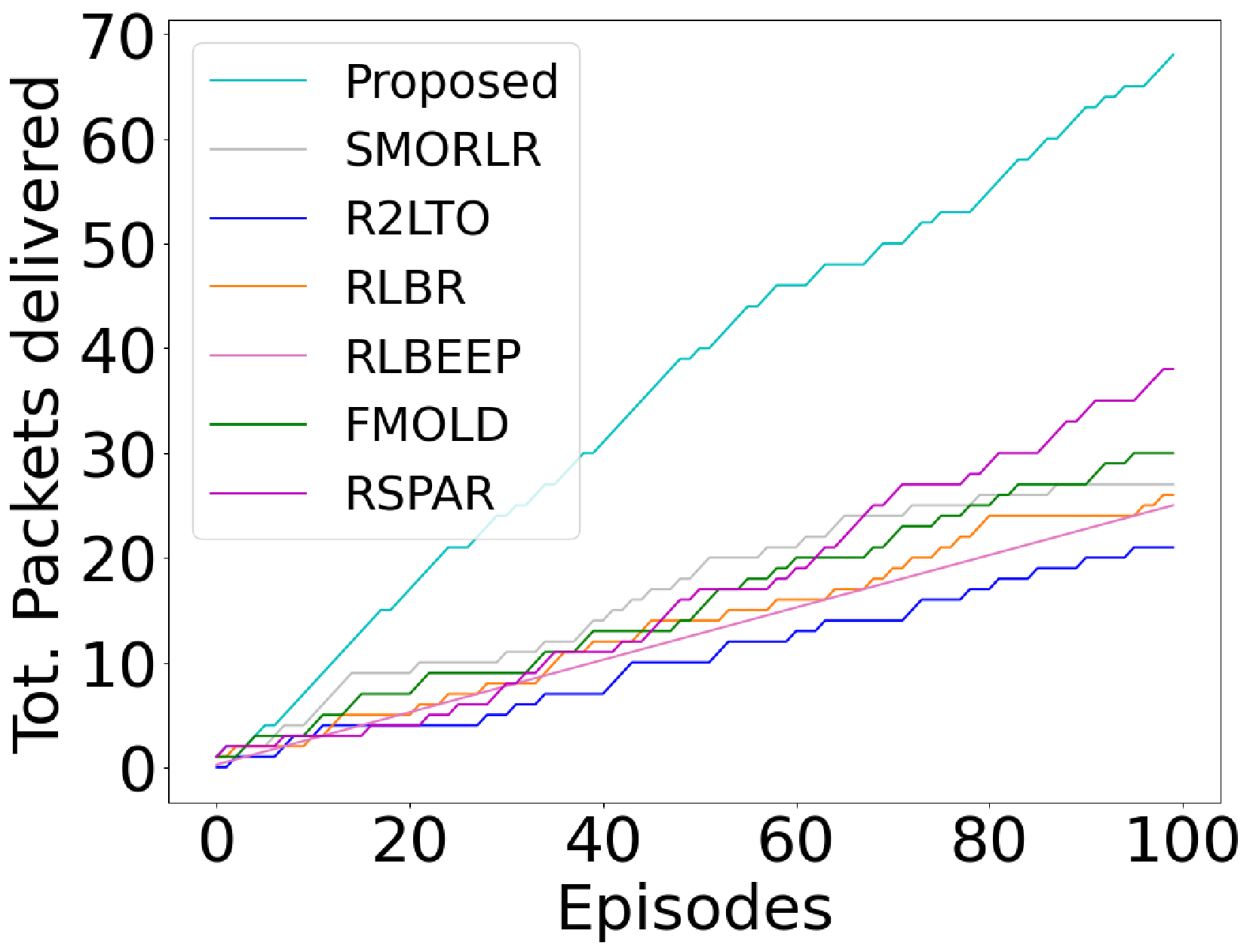}
         \caption{Exp. 2: Packets Delivered v/s Episodes}
         \label{fig:total_delivery_offline_cont_change}
     \end{subfigure}
   \caption{PDR in the Sequential exploration-exploitation scheme.}
        \label{fig:Results_Packet_Delivary_ratio}
\end{figure*}

We analyze cumulative packet delivery over a long run as shown in Figure \ref{fig:total_delivery_offline}, and the proposed method consistently outperforms all the baselines except for some initial fluctuations durations. As time passes, the difference between performances becomes more evident. At the end of $10000$ episodes, the proposed method has better performance than \texttt{RSPAR} by around $1.78\times$, \texttt{FMOLD} and \texttt{RLBEEP} by $3\times$, and \texttt{R2LTO} and \texttt{RLBR} by around $3.5\times$. Similar results were observed in Experiment 2, where preferences were randomly changed in each episode as shown in Figure \ref{fig:total_delivery_offline_cont_change}. We observe here that the total number of packets delivered by the proposed work consistently exceeds those delivered by the baseline method, and that the gap widens as more episodes are completed. By the end of $100$ episodes, the proposed algorithm's accumulated return reaches approximately $68$, which is $1.41\times$ that of \texttt{RSPAR}, $2.6\times$ greater than \texttt{FMOLD}, $2.62\times$ better than \texttt{R2LTO} and \texttt{RLBR}, $2.83\times$ higher than \texttt{SMORLR}, and $3.1\times$ higher than \texttt{RLBEEP}. Consequently, regardless of the exploration policy or preference variation patterns, the proposed method outperforms the baseline algorithms in terms of \texttt{PDR}.

\subsection{Simultaneous Exploration and Exploitation}\label{dynamic-online}

We further extended our experiments using a Simultaneous exploration and exploitation approach with varying frequency of preference variation at every thousandth episode (in Experiment $3$) and every episode (in Experiment $4$).
Similar to previous subsection, we evaluate the performance of the proposed Distributed DPQ-Learning routing algorithm against the six baseline routing algorithms based on three key criteria: overall reward, packet delivery, and energy consumption.

\subsubsection{Overall Reward}\label{dynamic-online-reward}
\begin{figure*}[t]
     \begin{subfigure}[b]{0.32\textwidth} 
         \centering
         \includegraphics[width=\textwidth]{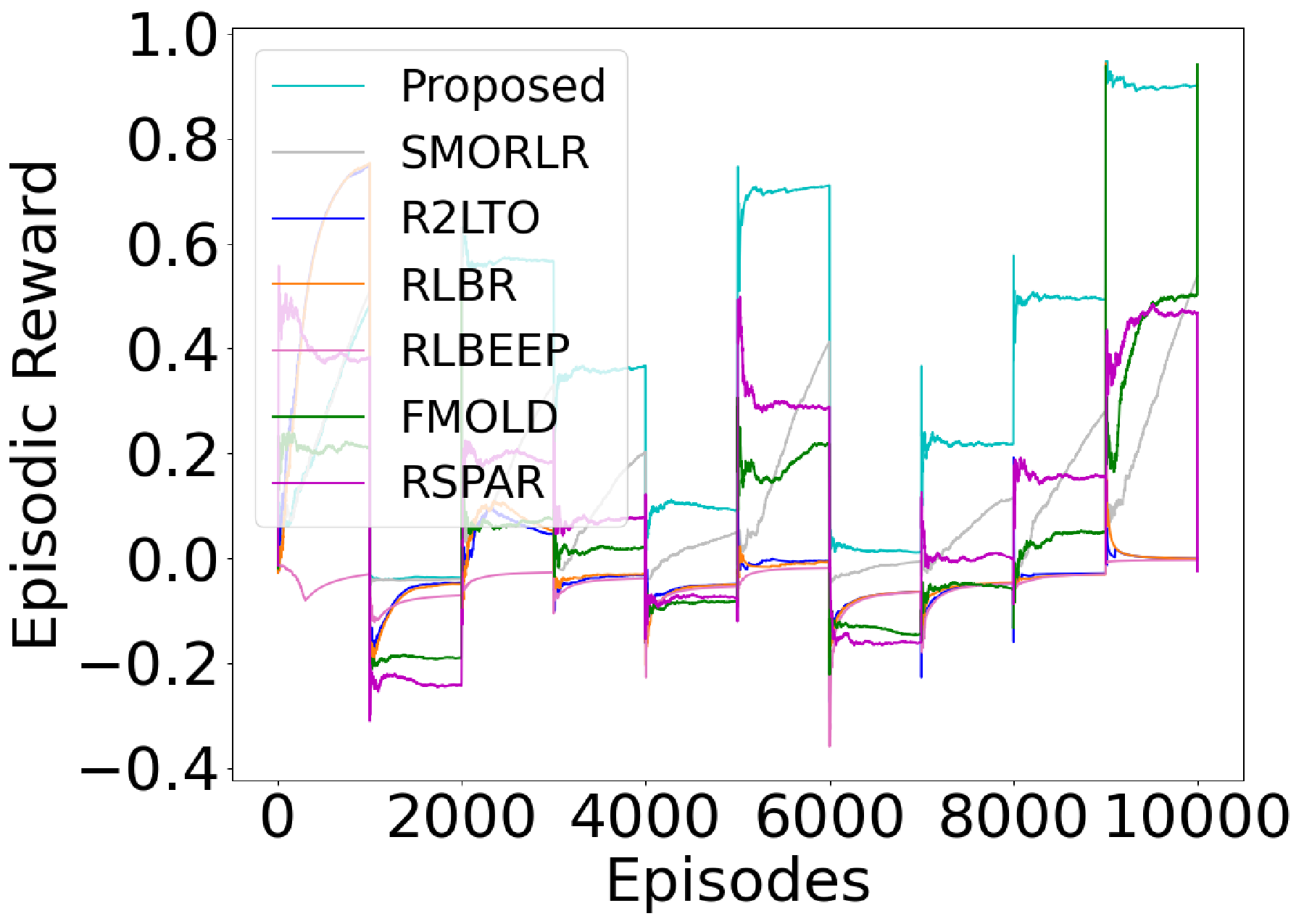}
\caption{Experiment 3: Comparison of the episodic rewards}\label{fig:reward_dynamic_online}
     \end{subfigure}
     \hfill
     \begin{subfigure}[b]{0.32\textwidth} 
         \centering
         \includegraphics[width=\textwidth]{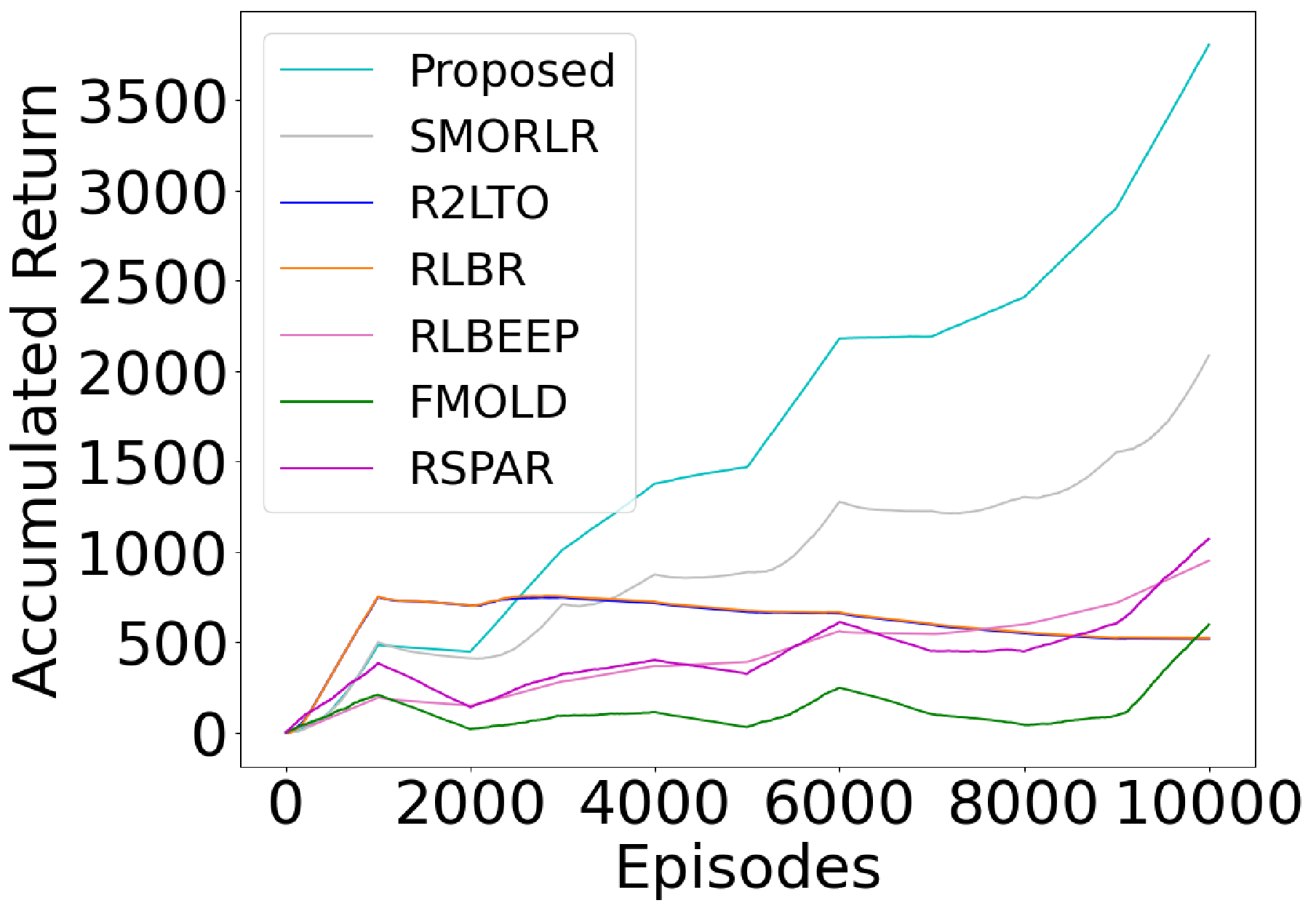}
\caption{Experiment 3: Comparison of accumulated returns}\label{fig:acc_return_dynamic_online}
     \end{subfigure}
        \hfill
  \begin{subfigure}[b]{0.32\textwidth}
         \centering
         \includegraphics[width=\textwidth]{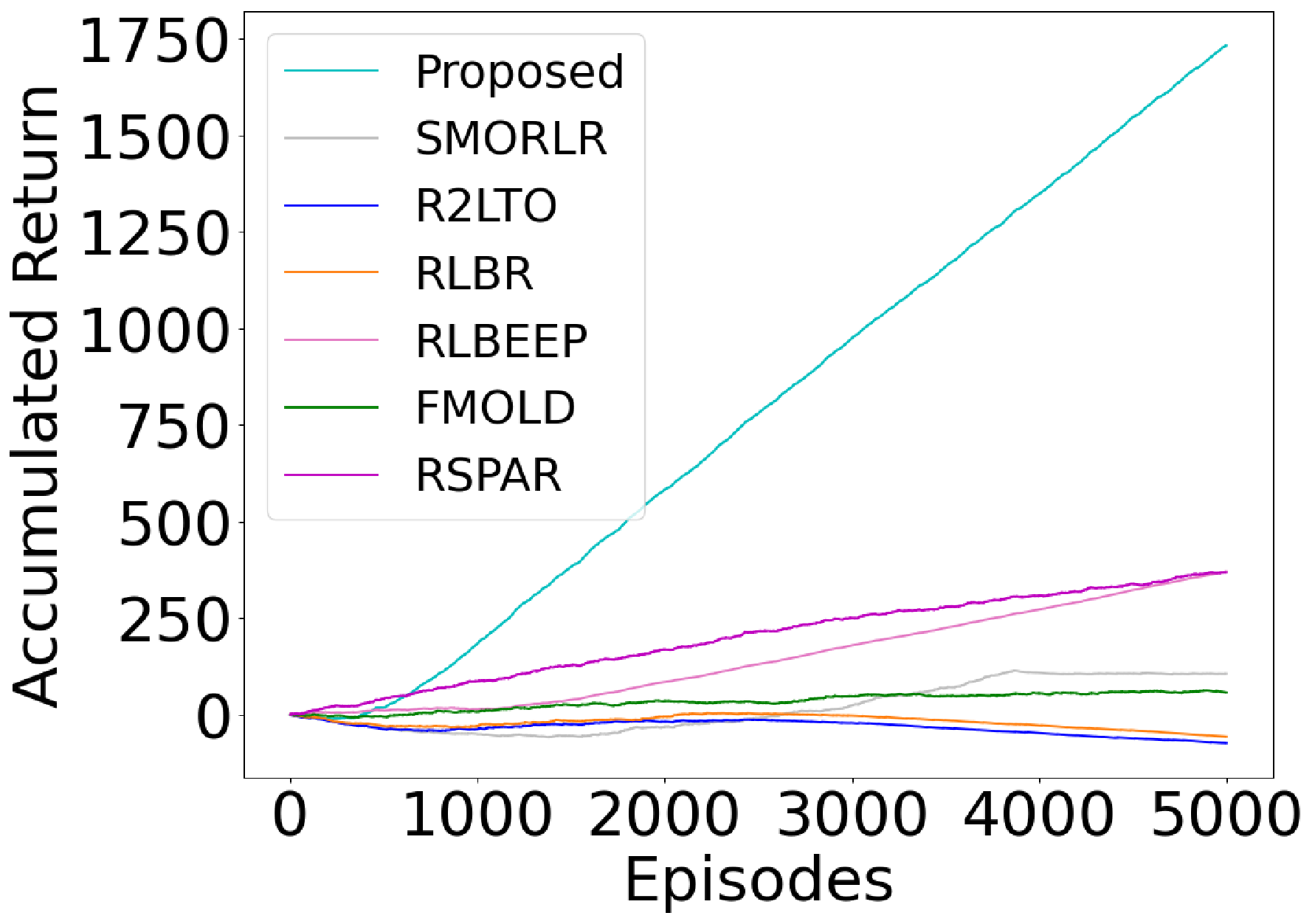}
         \caption{Experiment 4: Accumulated Return v/s Episodes}\label{fig:accumulated_Return_online_cont_change}
     \end{subfigure}    
   \caption{Overall Reward in the Simultaneous exploration-exploitation scheme.}
        \label{fig:Results_dynamic_online_Simultanious_OR}
\end{figure*}
We present both episodic rewards (Figure~\ref{fig:reward_dynamic_online}) and accumulated returns (Figure~\ref{fig:acc_return_dynamic_online}) of the proposed Distributed DPQ-Learning routing algorithm against the six baseline routing algorithms. From Figure~\ref{fig:reward_dynamic_online}, we observe that the proposed algorithm learns to optimize the overall return by performing more exploration than exploitation in the first $1000$ episodes, during which the Pareto front is also learned. In contrast, the simultaneous exploration--exploitation scheme carries out both processes in parallel, while gradually reducing exploration and increasing exploitation. 

From Figure~\ref{fig:reward_dynamic_online}, we note that the proposed algorithm may not be superior in the initial exploration phase, but after roughly $1000$ episodes it outperforms all baselines in episodic rewards. The proposed protocol also adjusts rapidly to changing preferences after limited exploration, while the baselines fail to do so. For example, \texttt{SMORLR} adapts after each preference change, but the adaptation is slow and preferences change again before it reaches the proposed protocol's performance. Further, \texttt{R2LTO}, \texttt{RLBEEP}, \texttt{FMOLD}, and \texttt{RLBR} consistently focus on a fixed objective preference, largely prioritizing energy efficiency. They learn certain intervals that improve returns, but still fall short of the proposed method by approximately $12$--$13\times$. \texttt{RSPAR} performs, on average, $344\%$ less efficiently than the proposed approach. Thus, while some baselines may perform well under specific preferences, the proposed algorithm consistently outperforms them and adapts quickly to changing preferences.

Figure \ref{fig:acc_return_dynamic_online} illustrates cumulative performance over time. Initially, all (baseline and proposed) algorithms start with zero cumulative returns. At the beginning, \texttt{RSPAR} and \texttt{RLBR} outperform the proposed DPQ-Learning algorithm. However, after a few explorations, the proposed approach starts utilizing the learned Pareto front and consistently outperforms the baselines, with the performance gap widening as more episodes are completed. Figure \ref{fig:acc_return_dynamic_online} shows that after $10,000$ episodes, the accumulated return of the proposed method surpasses \texttt{SMORLR} by $111.11\%$, \texttt{RSPAR} by $322.22\%$, \texttt{R2LTO} and \texttt{RLBEEP} by $375\%$, \texttt{FMOLD} by $850\%$, and \texttt{RLBR} by $927\%$ approximately. Similarly, significant improvements were seen in Experiment $4$, where preferences were randomly altered in each episode. Figure \ref{fig:accumulated_Return_online_cont_change} shows that the proposed routing protocol consistently achieves significantly higher accumulated returns than the baseline, with the performance gap growing as episodes progress. By the end of $5000$ episodes, the proposed algorithm achieves an accumulated return of around $1750$, which is $5.48\times$ that of \texttt{RSPAR} and \texttt{RLBEEP},  $8.21\times$ that of \texttt{SMORLR}, $34\times$ that of \texttt{FMOLD}, and $36\times$ that of \texttt{R2LTO} and \texttt{RLBR}. Based on these results, we observe that the proposed method consistently outperforms all baselines in terms of overall returns and adapts rapidly to changes in preferences regardless of exploration policy.

\subsubsection{Energy}\label{dynamic-online-energy} 
\begin{figure*}[t]
  \begin{subfigure}[b]{0.32\textwidth} 
         \centering
         \includegraphics[width=\textwidth]{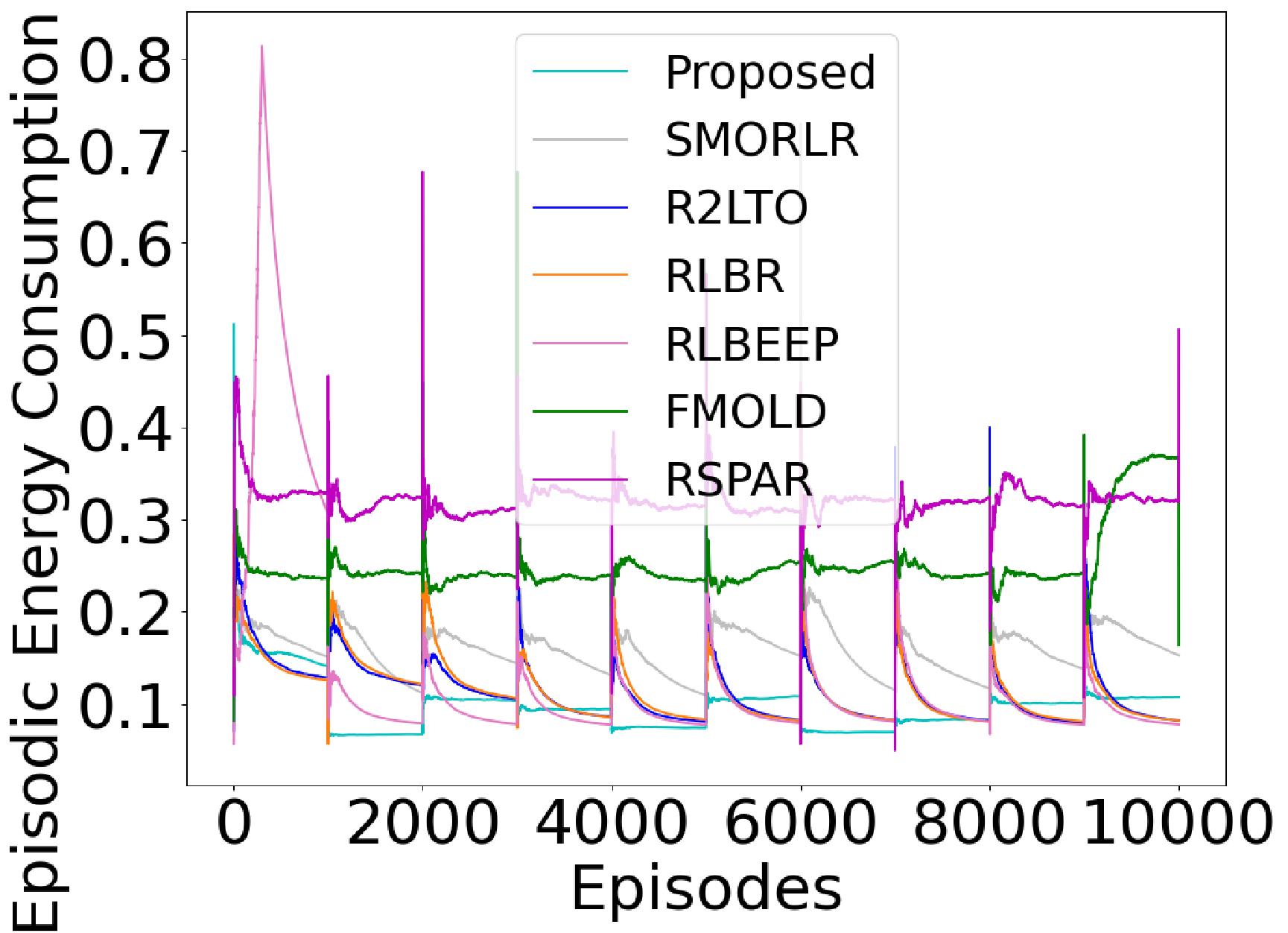}
         \caption{Experiment 3: Episodic Energy Consumption}
         \label{fig:energy_dynamic_online}
     \end{subfigure}
      \hfill
      \begin{subfigure}[b]{0.32\textwidth} 
         \centering
         \includegraphics[width=\textwidth]{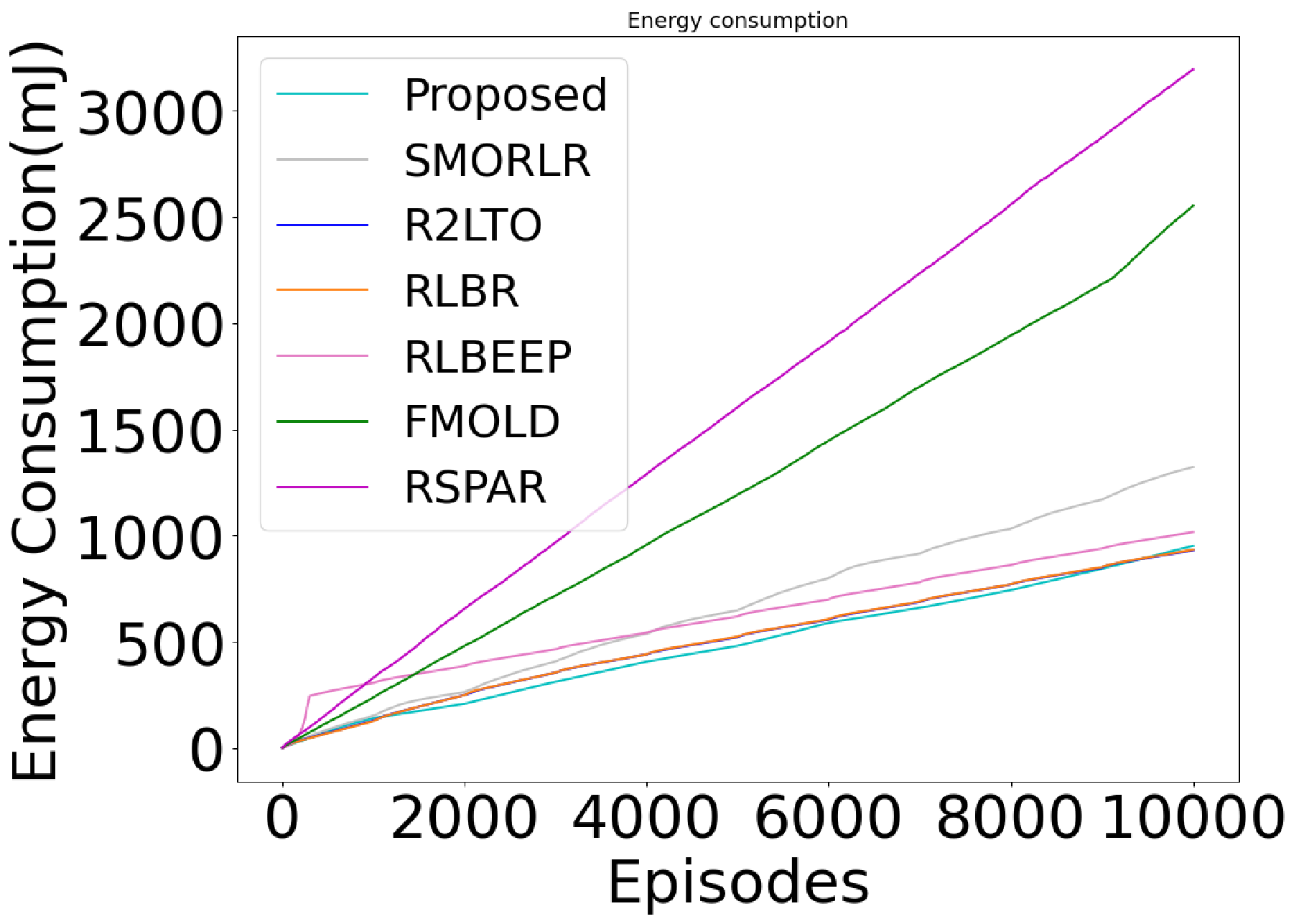}
\caption{Experiment 3: Comparison of total energy consumption.} \label{fig:online_total_Energy.eps}
     \end{subfigure}
      \hfill
    \begin{subfigure}[b]{0.32\textwidth} 
         \centering
         \includegraphics[width=\textwidth]{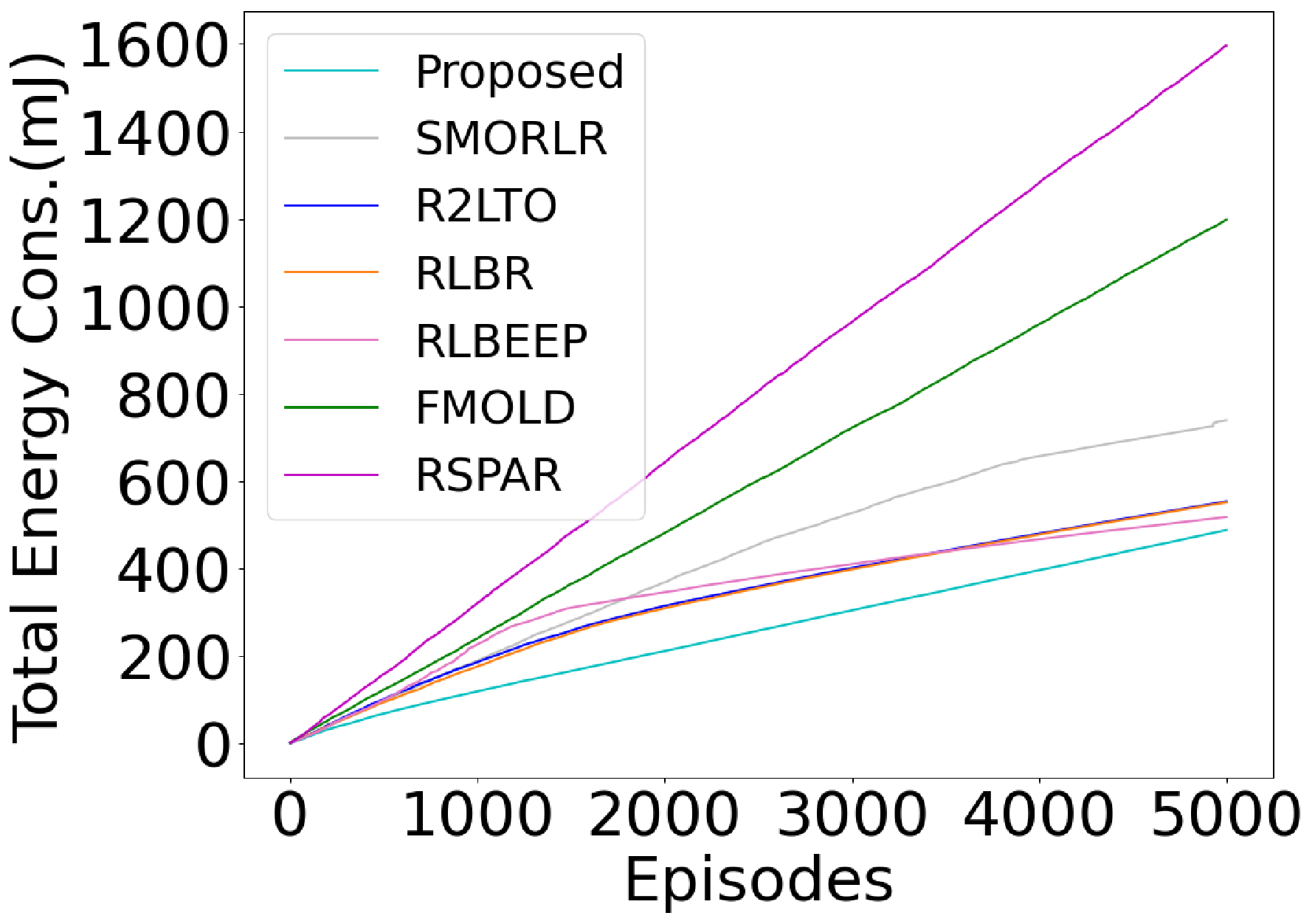}
         \caption{Experiment 4: Total Energy Consump. v/s Episodes}
         \label{fig:energy_online_cont_change}
     \end{subfigure}
   \caption{Energy Consumption in the Simultaneous exploration-exploitation scheme.}
        \label{fig:Results_dynamic_online_EC}
\end{figure*}
In this subsection, we assess the energy-efficiency of the proposed and baseline routing algorithms on individual objectives when preferences between them change periodically and frequently. Figure \ref{fig:energy_dynamic_online} highlights the episodic energy consumption (in mJ) across various routing algorithms over $10,000$ episodes, where the preference changes after every $1000$ episode. On average, the proposed distributed DPQ-Learning routing protocol demonstrates $65-80$\% lower energy consumption than the other routing protocols. The proposed approach consumed the minimum energy whenever a higher preference was given to the energy objective (intervals $1000-2000, 4000-5000$, and $6000-8000$) while performing reasonably well even when the preference for the energy objective was lower. On average, \texttt{SMORLR} consumes approximately $80\%$ more energy than the proposed method, although it gradually converges closer to it with each preference shift. However, the more frequently preferences change, the less efficient \texttt{SMORLR} becomes. \texttt{R2LTO}, \texttt{RLBEEP}, and \texttt{RLBR} consume around $60\%$ more energy than the proposed method. Since they focus primarily on energy considerations, they perform the highest among baselines. However, the proposed approach outperforms them because of a more efficient reward formulation. \texttt{FMOLD} consumes around $160\%$ more energy than the proposed method while maintaining a consistent performance. \texttt{RSPAR} exhibits the energy inefficiency $\approx200\%$ more energy consumption than the proposed algorithm.

The cumulative energy consumption shown in Figure \ref{fig:online_total_Energy.eps} indicate that after 10000 episodes, the cumulative energy consumption of the proposed method is very close to \texttt{RLBEEP}, \texttt{R2LTO}, and \texttt{RLBR}  with the value of around $700$ mJ, which is  $48.5\%$ lesser than that of \texttt{SMORLR} and $242.85\%$ less than that of \texttt{FMOLD}, and $242.85\%$ less than that of \texttt{RSPAR}. While \texttt{SMORLR}, \texttt{RLBEEP}, \texttt{R2LTO}, and \texttt{RLBR} perform reasonably well because of energy considerations in their RL reward formulations. Whereas \texttt{RSPAR} and \texttt{FMOLD} exhibit the highest cumulative energy consumption, underscoring their inefficiency. The plots demonstrate that even when energy is assigned to a lower preference, the overall energy consumption of the proposed routing protocol is still relatively low or equal to most baselines. A similar trend was observed in Experiment 4, where preferences were randomly altered in each episode. As seen in Figure \ref{fig:energy_online_cont_change}, the proposed routing protocol's cumulative energy consumption remains more efficient than that of the baseline methods, with the gap increasing as the episodes progress. After $5000$ episodes, the cumulative energy consumption of the proposed distributed DPQ-Learning routing algorithm is $78.125\%$ lower than that of \texttt{RSPAR}, $66.35\%$ lower than that of \texttt{FMOLD}, $41.67\%$ lower than that of \texttt{SMORLR}, $12.5\%$ lower than that of \texttt{R2LTO}, and \texttt{RLBR}, and $3\%$ lower than that of \texttt{RLBEEP}. Therefore, the proposed method consumes significantly less energy than baseline algorithms regardless of exploration strategy or preference shifts. Whenever an energy objective is given higher priority, the method consumes the least amount of energy and performs reasonably well even when the energy objective is given a lower priority.

\subsubsection{Packet Delivery}\label{dynamic-online-pdr}
Here, we assess PDR of the proposed and baseline algorithms on individual objectives when preferences between them change periodically (every 1000 episodes) and frequently (each episode). Figures \ref{fig:delivery_dynamic_online} and \ref{fig:online_total_packets_delivered.eps} illustrate variations in episodic and cumulative packet deliveries, respectively. From Figure \ref{fig:delivery_dynamic_online}, it is evident that after a certain duration of exploration ($1000$ episodes), the proposed distributed DPQ-Learning routing algorithm outperforms all baselines in episodes where the preference for the \texttt{PDR} objective exceeds 0.5. 
In comparison, \texttt{RSPAR} and \texttt{FMOLD} maintain episodic \texttt{PDR} levels of around 0.4-0.5 and 0.2-0.3, respectively. Although \texttt{SMORLR} gradually learns to maximize \texttt{PDR} with each new preference vector, its peak performance remains 50\% lower than the consistent results achieved by the proposed algorithm (after the exploration period). Meanwhile, \texttt{RLBEEP}, \texttt{R2LTO}, and \texttt{RLBR} primarily focus on energy optimization and disregard the \texttt{PDR} objective, which consistently underperforms. Initially, \texttt{R2LTO} and \texttt{RLBR} demonstrate reasonable \texttt{PDR} performance, but it deteriorated over time.

\begin{figure*}[t]
     \begin{subfigure}[b]{0.32\textwidth} 
         \centering
         \includegraphics[width=\textwidth]{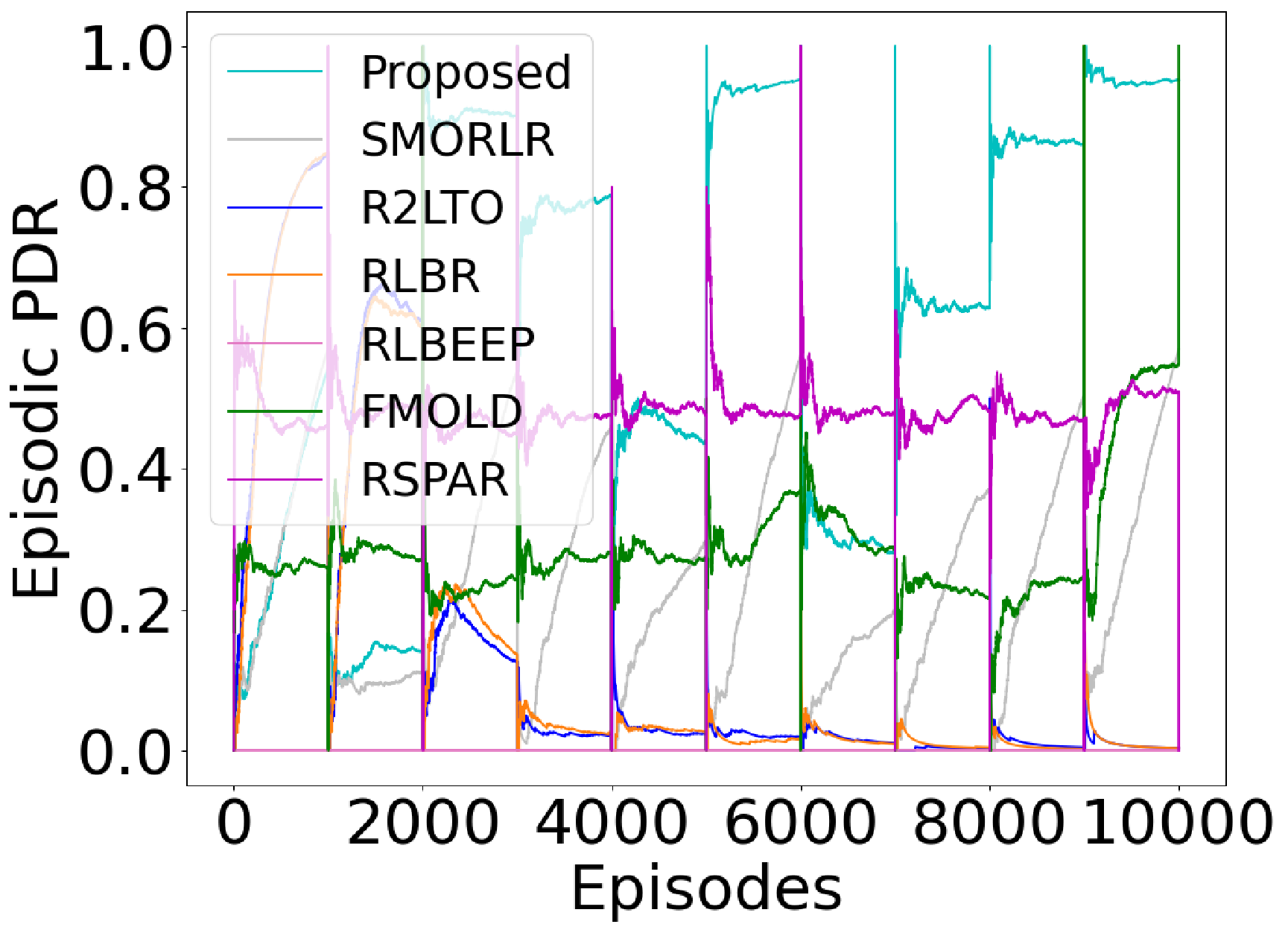}
         \caption{Experiment 3: Episodic PDR}
         \label{fig:delivery_dynamic_online}
     \end{subfigure}
           \hfill
     \begin{subfigure}[b]{0.32\textwidth} 
         \centering
         \includegraphics[width=\textwidth]{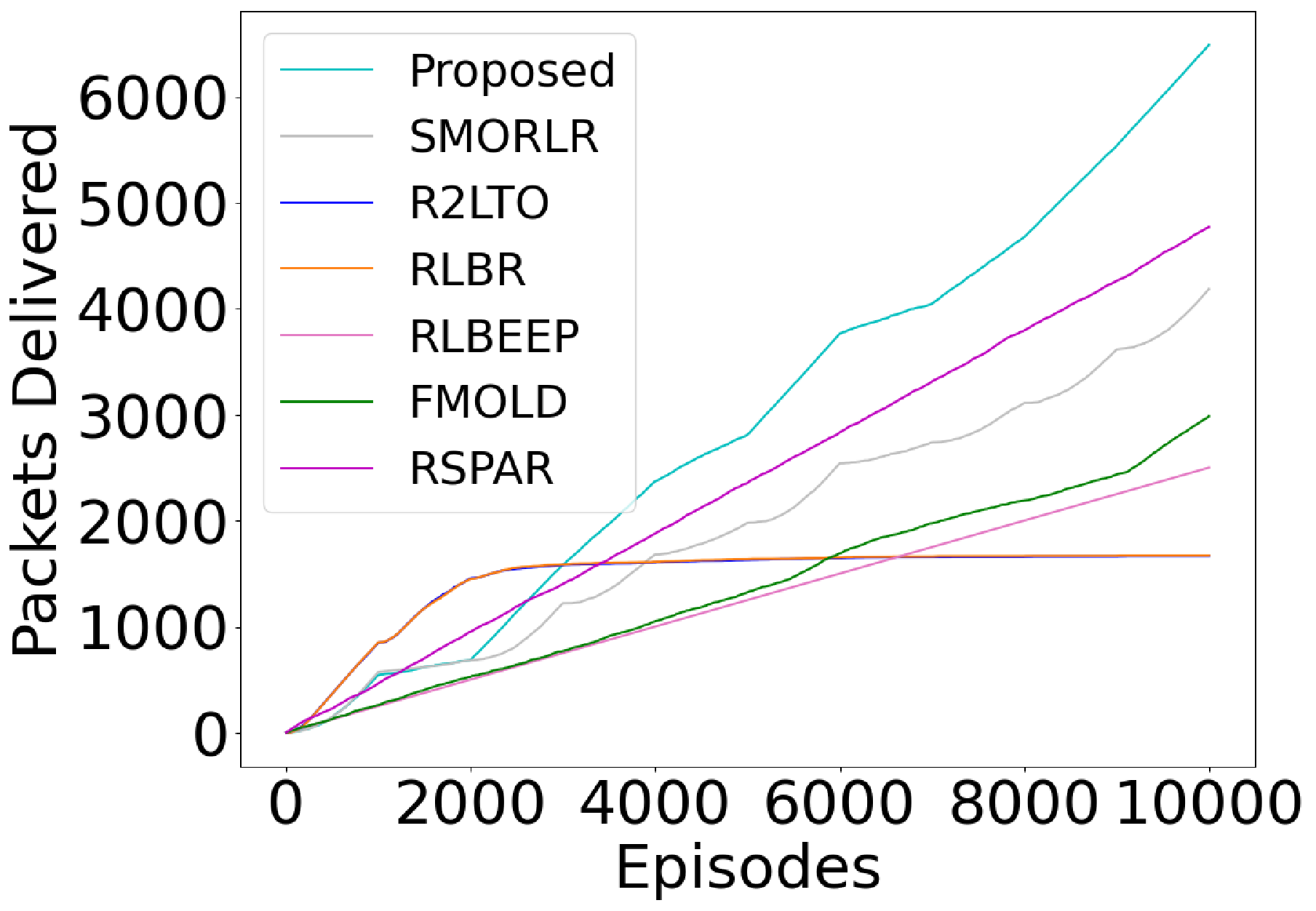}
         \caption{Experiment 3: Comparison of total packets delivered}
         \label{fig:online_total_packets_delivered.eps}
     \end{subfigure}
     \hfill
     \begin{subfigure}[b]{0.32\textwidth}
         \centering
         \includegraphics[width=\textwidth]{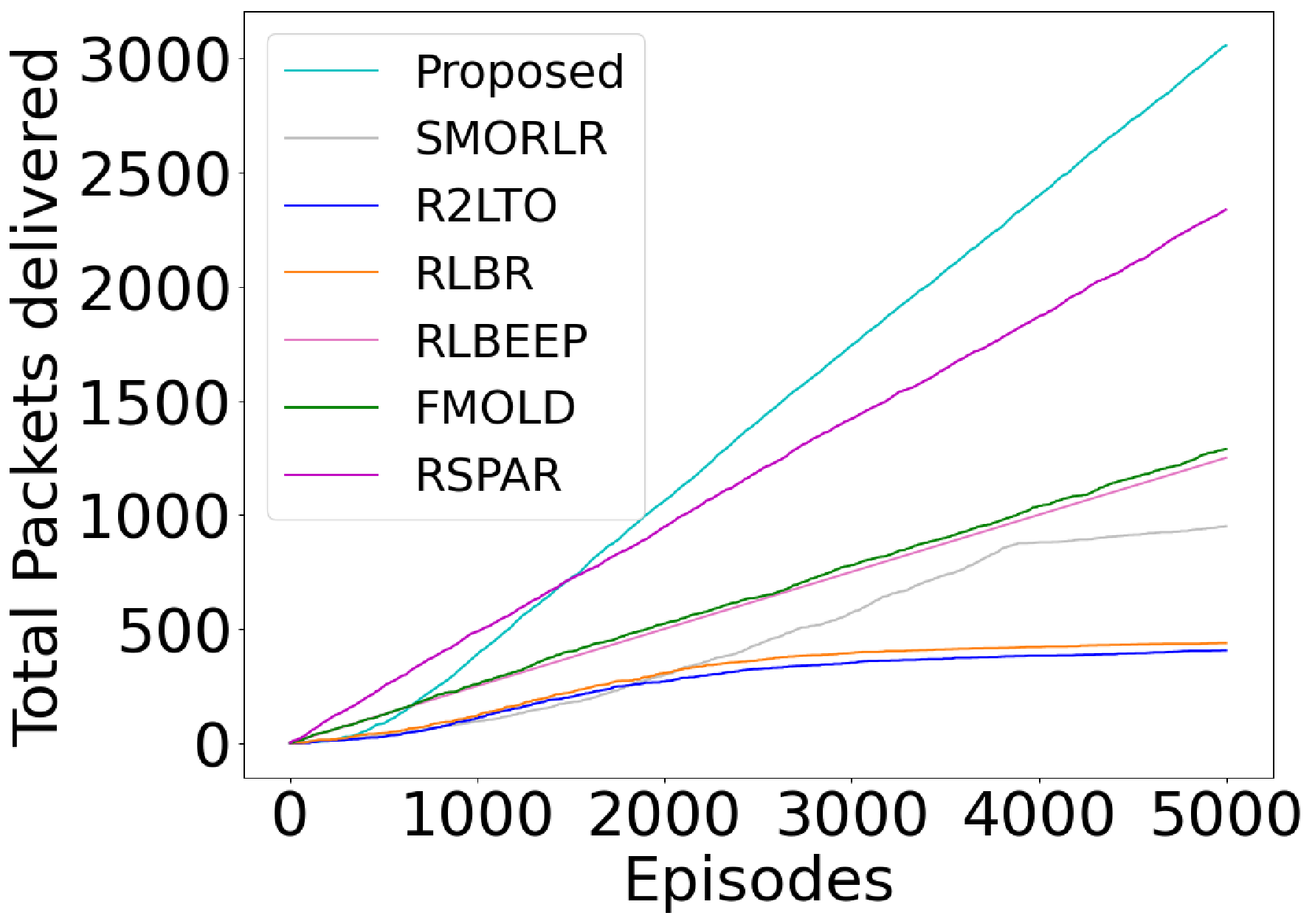}
         \caption{Experiment 4: Total Packets Delivered v/s Episodes}
         \label{fig:delivery_online_cont_change}
     \end{subfigure}
   \caption{Packet Delivery Ratio in the Simultaneous exploration-exploitation scheme.}
        \label{fig:Results_dynamic_online_PDR}
\end{figure*}

Figure~\ref{fig:online_total_packets_delivered.eps} provides a more comprehensive view through cumulative packet delivery, enabling a fair long-run comparison. The proposed method begins to outperform all baselines after $\approx 3000$ episodes of sufficient learning, and the gap widens over time. By the end of $10{,}000$ episodes, it delivers $46.67\%$ more packets than \texttt{RSPAR}, $69.23\%$ more than \texttt{SMORLR}, $164\%$ more than \texttt{FMOLD}, $230\%$ more than \texttt{RLBR}, and $255.12\%$ more than \texttt{R2LTO} and \texttt{RLBEEP}. Similar trends appear in Experiment~4 with per-episode random preference changes. As shown in Figure~\ref{fig:delivery_online_cont_change}, the proposed protocol consistently exceeds all baselines, with the separation increasing as episodes progress. At $5000$ episodes, it reaches around $3200$ delivered packets, which is $33.33\%$ higher than \texttt{RSPAR}, around $166\%$ higher than \texttt{FMOLD} and \texttt{RLBR}, $220\%$ higher than \texttt{SMORLR}, and $814.3\%$ higher than \texttt{RLBEEP} and \texttt{R2LTO}. Overall, the proposed method performs best when the preference for \texttt{PDR} is high, while remaining acceptable for lower preferences.

\subsection{Sensitivity Analysis}
\label{subsec:sensitivity}

To assess the robustness of the proposed Dynamic Preference-based Q-learning (DPQ) routing framework, we conducted a sensitivity analysis by varying the temporal aggregation window used during testing. This analysis evaluates how performance changes when network preferences persist for shorter or longer durations, which is representative of heterogeneous and time-varying IoT operating conditions.

\paragraph{Experimental Setup}
All experiments were conducted in the same wireless sensor network environment described earlier. After a fixed training phase, routing policies were evaluated over testing windows of length $W \in \{50, 200, 500\}$ episodes. To evaluate sensitivity with respect to preference discretization, the proposed method was tested using two different preference grids: a \emph{coarse} grid with two anchor points, $\beta \in \{0, 1\}$, and a \emph{fine} grid with eleven anchor points, $\beta \in \{0, 0.1, \ldots, 1\}$. All other parameters, including network topology, traffic patterns, and random seeds, were kept identical across methods to ensure a fair comparison.

The cumulative reward, which jointly captures packet delivery success and energy consumption, is the primary optimization objective of the proposed method. Accordingly, reward is reported first and highlighted in the tables, while packet delivery ratio (PDR) and energy consumption are included as supporting metrics.

\paragraph{Compared Methods}
We compare two variants of the proposed approach—DPQ with a coarse preference grid (DPQ coarse\_2) and a finer grid (DPQ fine\_11)—against representative reinforcement-learning-based baselines (R2LTO, RLBR, RLBEEP) and a static multi-objective Q-learning method (SMORLR). The coarse and fine grids correspond to two-point and eleven-point preference discretizations, respectively.

\paragraph{Results and Discussion}
Tables~\ref{tab:sens_w50}--\ref{tab:sens_w500} summarize the sensitivity results. Across all window lengths, the proposed DPQ variants consistently achieve the highest cumulative reward, demonstrating robustness to changes in preference persistence. While certain baselines occasionally achieve comparable PDR or lower energy consumption, they fail to balance these objectives effectively, resulting in significantly lower reward. This confirms that optimizing individual metrics in isolation does not yield optimal long-term routing performance.

\begin{table}[t]
\centering
\caption{Sensitivity results for window length $W=50$ (mean $\pm$ std).}
\label{tab:sens_w50}
\begin{tabular}{lccc}
\hline
Method & Reward ($\uparrow$) & PDR ($\uparrow$) & Energy ($\downarrow$) \\
\hline
DPQ coarse\_2 & \textbf{0.863 $\pm$ 0.236} & 0.954 $\pm$ 0.168 & 0.196 $\pm$ 0.331 \\
DPQ fine\_11 & \textbf{0.862 $\pm$ 0.236} & 0.953 $\pm$ 0.161 & 0.110 $\pm$ 0.011 \\
SMORLR (Static-Q) & 0.120 $\pm$ 0.066 & 0.141 $\pm$ 0.056 & 0.193 $\pm$ 0.026 \\
R2LTO & 0.120 $\pm$ 0.063 & 0.142 $\pm$ 0.052 & 0.182 $\pm$ 0.020 \\
RLBR & 0.120 $\pm$ 0.063 & 0.142 $\pm$ 0.052 & 0.182 $\pm$ 0.020 \\
RLBEEP & 0.120 $\pm$ 0.063 & 0.142 $\pm$ 0.052 & 0.182 $\pm$ 0.020 \\
\hline
\end{tabular}
\end{table}

\begin{table}[t]
\centering
\caption{Sensitivity results for window length $W=200$ (mean $\pm$ std).}
\label{tab:sens_w200}
\begin{tabular}{lccc}
\hline
Method & Reward ($\uparrow$) & PDR ($\uparrow$) & Energy ($\downarrow$) \\
\hline
DPQ coarse\_2 & \textbf{0.610 $\pm$ 0.374} & 0.815 $\pm$ 0.298 & 0.437 $\pm$ 0.579 \\
DPQ fine\_11 & \textbf{0.605 $\pm$ 0.374} & 0.809 $\pm$ 0.285 & 0.101 $\pm$ 0.017 \\
SMORLR (Static-Q) & 0.292 $\pm$ 0.223 & 0.403 $\pm$ 0.158 & 0.140 $\pm$ 0.025 \\
R2LTO & 0.350 $\pm$ 0.185 & 0.545 $\pm$ 0.031 & 0.150 $\pm$ 0.005 \\
RLBR & 0.350 $\pm$ 0.185 & 0.545 $\pm$ 0.031 & 0.150 $\pm$ 0.005 \\
RLBEEP & 0.350 $\pm$ 0.185 & 0.545 $\pm$ 0.031 & 0.150 $\pm$ 0.005 \\
\hline
\end{tabular}
\end{table}

\begin{table}[t]
\centering
\caption{Sensitivity results for window length $W=500$ (mean $\pm$ std).}
\label{tab:sens_w500}
\begin{tabular}{lccc}
\hline
Method & Reward ($\uparrow$) & PDR ($\uparrow$) & Energy ($\downarrow$) \\
\hline
DPQ coarse\_2 & \textbf{0.453 $\pm$ 0.346} & 0.737 $\pm$ 0.346 & 0.516 $\pm$ 0.653 \\
DPQ fine\_11 & \textbf{0.439 $\pm$ 0.346} & 0.713 $\pm$ 0.332 & 0.095 $\pm$ 0.019 \\
SMORLR (Static-Q) & 0.232 $\pm$ 0.227 & 0.417 $\pm$ 0.212 & 0.101 $\pm$ 0.013 \\
R2LTO & 0.413 $\pm$ 0.246 & 0.819 $\pm$ 0.011 & 0.129 $\pm$ 0.003 \\
RLBR & 0.413 $\pm$ 0.246 & 0.819 $\pm$ 0.011 & 0.129 $\pm$ 0.003 \\
RLBEEP & 0.413 $\pm$ 0.246 & 0.819 $\pm$ 0.011 & 0.129 $\pm$ 0.003 \\
\hline
\end{tabular}
\end{table}
\color{black}

\section{Conclusions} \label{sec:conclusion}
This paper presents a distributed multi-objective Q-learning–based routing algorithm for IoT and wireless sensor networks. The method learns multiple per-preference Q-tables in parallel and uses a greedy interpolation policy to adapt instantly to changing preference weights without retraining or central coordination. Theoretical analysis establishes a uniform near-optimality guarantee, explaining the stability and effectiveness of the proposed interpolation mechanism. Simulation results confirm consistent gains over existing schemes, demonstrating that the proposed Dynamic Preference Q-Learning framework provides a light-weight and theoretically grounded approach for real-time distributed IoT routing.The sensitivity analysis validates that the proposed method maintains superior reward performance across diverse settings, highlighting its robustness and suitability for dynamic IoT routing scenarios with time-varying objectives.

\bibliographystyle{IEEEtran}
\bibliography{references.bib}

\balance

\vskip -1\baselineskip plus -1fil
\begin{IEEEbiographynophoto}{Shubham Vaishnav}
is currently pursuing Ph.D. at the Department of Computer and Systems Science, Stockholm University, Sweden. He received Bachelors' and Masters' degrees in Computer Science \& Engineering from IIT (ISM), Dhanbad. His research includes Reinforcement learning, Federated Learning, and AI-driven Multiobjective decision-making with applications to IoT. 
\end{IEEEbiographynophoto}\vskip -1\baselineskip plus -1fil
\begin{IEEEbiographynophoto}{Praveen Kumar Donta} (SM'22), Associate Professor at the Department of Computer and Systems Sciences, Stockholm University, Sweden. He was a Postdoc at the Distributed Systems Group, TU Wien, Vienna, Austria. He received his Ph. D. at the IIT (ISM), Dhanbad in June 2021. His current research is on Distributed Computing Continuum Systems.
\end{IEEEbiographynophoto}\vskip -1\baselineskip plus -1fil
\begin{IEEEbiographynophoto}{Sindri Magnússon}
is an Associate Professor in the Department of Computer and Systems Science at Stockholm University, Sweden. He received a PhD in Electrical Engineering from KTH Royal Institute of Technology, Stockholm, Sweden, in 2017. He was a postdoctoral researcher 2018-2019 at Harvard University, Cambridge, MA. His research interests include large-scale distributed/parallel optimization, machine learning, and control.
\end{IEEEbiographynophoto}

\end{document}